\newtheorem{theorem}{Theorem}
\newtheorem{implication}{Implication}
\newtheorem{lemma}[theorem]{Lemma}
\renewcommand{\Re}{\mathrm{Re}}
\newcommand{\kw}[1]{\frac{1}{#1}}
\newcommand{\tkw}[1]{\tfrac{1}{#1}}
\newcommand{\mc}[1]{\mathcal{#1}}
\newcommand{\ad}{^\dagger}
\newcommand{\e}{\mathrm{e}}
\renewcommand{\i}{\mathrm{i}}
\newcommand{\1}{\mathbb{1}}
\newcommand{\del}{\partial}
\newcommand{\rmd}{\mathrm{d}}
\newcommand{\ddt}{\frac{\rmd}{\rmd t}}
\newcommand{\id}{\mathrm{id}}
\newcommand{\landauO}{\mathrm{O}}
\newcommand{\mcK}{\mc{K}}
\newcommand{\mcL}{\mathcal{L}}
\newcommand{\tr}{\mathrm{tr}}
\newcommand{\argdot}{ \cdot }
\newcommand{\Norm}[1]{\left\Vert #1 \right\Vert_{1\rightarrow 1}}
\newcommand{\nNorm}[1]{\lVert #1\rVert_{1\rightarrow 1}}
\newcommand{\BNorm}[1]{\Bigl\Vert #1 \Bigr\Vert_{1\rightarrow 1}}
\newcommand{\norm}[1]{\left\Vert #1 \right\Vert}
\newcommand{\nnorm}[1]{\lVert #1\rVert}
\newcommand{\av}{\mathrm{av}}
\newcommand{\dist}{\mathrm{dist}}
\newcommand{\mcD}{\mc{D}} %Dissipator
\newcommand{\HS}{\mathrm{HS}}
\newcommand{\SK}{\mathrm{SK}}
\newcommand{\ket}[1]{\left.\left|{#1}\right.\right\rangle}
\newcommand{\bra}[1]{\left.\left\langle{#1}\right.\right|}
\newcommand{\braket}[2]{\left\langle #1 \middle| #2 \right\rangle}
\newcommand{\ketbra}[2]{\ket{#1} \!\! \bra{#2}}
\newcommand{\constc}{2a^2+8a^3 d^k +16 a^4 d^{2k}}
\newcommand{\constb}{2 a^2(2+4 d^k)}
\newcommand{\T}[3][]{T_{\mc{#1}}(#2,#3)}
\newcommand{\iT}[3][]{T_{\mc{#1}}^-(#2,#3)}
\newcommand{\odeiT}{\eqref{odeiT} }
\newcommand{\odeT}{\eqref{odeT} }
\newcommand{\fulltrotterthm}{\ref{fulltrotterthm} }
\newcommand{\usualtriangleinequalitytrick}{\eqref{usualtriangleinequalitytrick} }
\newcommand{\numbertsteps}{\eqref{numbertsteps} }
\newcommand{\seealsoepaps}{(see the appendix)}
\newcommand{\theApndx}{the Appendix}
\newcommand{\potsdam}{Institute for Physics and Astronomy, University of Potsdam, 14476 Potsdam, Germany}
\newcommand{\fu}{Dahlem Center for Complex Quantum Systems, Freie Universit{\"a}t Berlin, 14195 Berlin, Germany}
\newcommand{\copenhagen}{Niels Bohr Institute, University of Copenhagen, 2100 Copenhagen, Denmark}
\begin{document}
\title{A dissipative quantum Church-Turing theorem}
 
\author{M.\ Kliesch}
\affiliation{\fu}
\affiliation{\potsdam}

\author{T.\ Barthel}
\affiliation{\fu}
\affiliation{\potsdam}

\author{C.\ Gogolin}
\affiliation{\fu}
\affiliation{\potsdam}

\author{M.\ Kastoryano}
\affiliation{\copenhagen}

\author{J.\ Eisert}
\affiliation{\fu}
\affiliation{\potsdam}

\begin{abstract}
We show that the time evolution of an open quantum system, described by a possibly time dependent Liouvillian, can be simulated by a unitary quantum circuit of a size scaling polynomially in the simulation time and the size of the system. 
An immediate consequence is that dissipative quantum computing is no more powerful than the unitary circuit model.
Our result can be seen as a dissipative Church-Turing theorem, since it implies that under natural assumptions, such as weak coupling to an environment, the dynamics of an open quantum system can be simulated efficiently on a quantum computer.
Formally, we introduce a Trotter decomposition for Liouvillian dynamics and give explicit error bounds. This constitutes a practical tool for numerical simulations, e.g., using matrix-product operators.
We also demonstrate that most quantum states cannot be prepared efficiently.
\end{abstract}

\pacs{ 
03.67.Ac, % Quantum algorithms and protocols -  quantum information
02.60.Cb% Numerical approximation and analysis - Numerical simulation; solution of equations
03.65.Yz, % Decoherence; open systems; quantum statistical methods
89.70.Eg, % Information theory - computational complexity
}

\maketitle

One of the cornerstones of theoretical computer science is the Church-Turing thesis \cite{CT,Vazirani}.
In its strong formulation it can be captured in the 
following way \cite{KayLafMos07,Davis}: 
``A probabilistic Turing machine can efficiently simulate any realistic model of computation.''
As such, it reduces any physical process -- that can intuitively be thought of as a computational task in a wider sense -- to what an elementary standard computer can do. 
Needless to say, in its strong formulation, the Church-Turing thesis is challenged by the very idea of a quantum computer, and hence by a fundamental physical theory that initially was thought to be irrelevant for studies of complexity. 
There are problems a quantum computer could efficiently solve that are believed to be intractable on any classical computer. 

In this way, it seems that the strong Church-Turing thesis has to be replaced by a quantum version \cite{Vazirani}. Colloquially speaking, the quantum Church-Turing thesis says that any process that can happen in nature that one could think of as being some sort of computation is efficiently simulatable:\\ 
\textbf{Strong quantum Church-Turing thesis.} % unfortunately \newtheorem* apparently does not work with revtex...
\textit{Every quantum mechanical computational process can be simulated efficiently in 
the unitary circuit model of quantum computation.}

Indeed, this notion of quantum computers being devices that can efficiently simulate natural 
quantum processes, being known under the name ``quantum simulation,'' is the topic of an entire 
research field initiated by the work of Feynman \cite{Feynman}. 
Steps towards a rigorous formulation have been taken by Lloyd \cite{Lloyd} and many others \cite{Others}.

Quite surprisingly, a very important class of physical processes appears to have been omitted in the quest for finding a sound theory of quantum simulation, namely \emph{dissipative quantum processes}.
Such processes are particularly relevant since, in the end, every physical process is to some extent  dissipative.
If one aims at simulating a quantum process occurring in a lab, 
one cannot, however, reasonably require the inclusion of all modes of the environment to which the 
system is coupled into the simulation.
Otherwise, one would always have to simulate all the modes of the environment, eventually of the entire universe, rendering the task of simulation obsolete and futile. 
We argue that the most general setting in which one can hope for efficient simulatability is the one of Markovian dynamics \cite{Markov} with arbitrary piecewise continuous time dependent control \cite{pedanticMartin}.
In any naturally occurring process the Liouvillian $\mcL$ determining the equation of motion 
\begin{align}\label{ME}
	\ddt \rho(t) = \mcL_t(\rho(t))
\end{align}
of the system state $\rho$ is \emph{$k$-local}. 
This means that the system is multipartite and $\mcL$ can be written as a sum of Liouvillians each acting nontrivially on at most $k$ subsystems. 
In fact, all natural interactions are two-local in this sense. 
Since we are interested in processes which can be viewed as a computation, we assume that the subsystems are of fixed finite dimension.
This is arguably the broadest class of natural physical processes that should be taken into account in a dissipative Church-Turing theorem and includes the Hamiltonian dynamics of closed systems as a special case.

In this work, we show the following.
 
(i) Every time evolution generated by a $k$-local time dependent Liouvillian can be simulated 
by a unitary quantum circuit with resources scaling polynomially 
in the system size $N$ and simulation time $\tau$.

(ii) As a corollary, we obtain that the dissipative model for quantum computing \cite{VerWolCir09}
can be reduced to the circuit model -- proving a conjecture that was still open. 

(iii) Technically, we show that the dynamics can be approximated by a Trotter decomposition, 
giving rise to a circuit of local channels, actually being reminiscent of the situation of unitary dynamics. 
In particular, in order to reach a final state that is only $\epsilon$ distinguishable from the exactly time evolved state, it will turn out to be sufficient to apply a circuit of $K m$ local quantum channels, where
\begin{align}
  \label{numbertsteps}
  m = \left\lceil \max\left( \frac{2cK^2 \tau^2}{\epsilon}, \frac{\tau b}{\ln 2} \right) \right\rceil 
\end{align}
is the number of time steps, 
$K\leq N^k$ is the number of local terms in the Liouvillian, and $b$ and $c$ are constants independent of $N$, $\tau$, $K$, and $\epsilon$. 
Some obstacles of naive attempts to simulate dissipative dynamics are highlighted,
and the specific role of the appropriate choice of norms is emphasized.

(iv) We also show that most quantum states cannot be prepared efficiently.

(v) In addition, the Trotter decomposition with our rigorous error bound is a practical 
tool for the numerical simulation of dissipative quantum dynamics on classical computers.

\paragraph{Setting.} %\paragraph should be used according to the APS style guide
We consider general quantum systems consisting of $N$ subsystems of Hilbert space dimension $d$.
The dynamics is described by a quantum 
master equation \eqref{ME} 
with a $k$-local Liouvillian of the form 
\begin{align} \label{LX}
  \mcL = \sum_{\Lambda\subset [N]} \mcL_\Lambda \ ,
\end{align}
where $[N]\coloneqq\{1,2,\ldots, N\}$ and $\mcL_\Lambda$ are \emph{strictly $k$-local} Liouvillians.
The subscript $\Lambda$ means that the respective operator or superoperator acts nontrivially only on the subsystem $\Lambda$ and we call an operator or superoperator \emph{strictly $k$-local} if it acts nontrivially only on at most $k$ subsystems. 
Each of the Liouvillians $\mcL_\Lambda$ can be written \cite{WolCir08} in Lindblad form \cite{Lin76}
\begin{align}\label{lindbladform}
\mcL_\Lambda=-\i [H_\Lambda, \argdot]+ \sum_{\mu=1}^{d^k} \mcD[L_{\Lambda,\mu}] \ ,
\end{align}
where $\mcD[X](\rho) \coloneqq 2 X \rho X\ad - \{ X\ad X , \rho \}$ and may depend on time piecewise 
continuously. 
In particular, we do not require any bound on the rate at which the Liouvillians may change.

The \emph{propagators} $\T[L]ts$ are the family of superoperators defined by
\begin{align}\label{defTeq}
\rho(t) = \T[L]{t}{s}(\rho(s))
\end{align}
for all $t\geq s$. 
They are completely positive and trace preserving (CPT) and uniquely solve the initial value problem
\begin{align}\label{odeT}
\ddt T(t,s)= \mcL_t T(t,s) \ , \quad T(s,s)=\id \ .
\end{align}

The main result, which is a bound on the error of the Trotter decomposition, will 
be somewhat reminiscent of the Trotter formula for time dependent Hamiltonian dynamics derived in Ref.\ \cite{HuyRae90}.
The main challenge comes from the fact that we are dealing with superoperators rather than operators.
The key to a meaningful Trotter decomposition is the choice of suitable norms for these superoperators. 
The physically motivated and strongest norm is the one arising from the operational distinguishability of 
two quantum states $\rho$ and $\sigma$, which is given by the trace distance 
$\dist(\rho,\sigma) \coloneqq \sup_{0 \leq A \leq \1} \tr(A(\rho-\sigma))$.
The trace distance coincides up to a factor of $1/2$ with the distance induced by the Schatten $1$-norm 
$\nnorm{\argdot}_1$, 
where the Schatten $p$-norm of a matrix $A$ is $\nnorm{A}_p\coloneqq(\tr(|A|^p))^{1/p}$.
Therefore, we measure errors of approximations of superoperators with the induced 
operator norm, which is the so-called $(1\rightarrow 1)$-norm.
In general the $(p \rightarrow q)$-norm of a superoperator $T\in \mc B( \mc B( \mc H))$ 
is defined as \cite{Wat05}
\begin{align}
  \norm{T}_{p\rightarrow q} \coloneqq \sup_{\nnorm{A}_p=1} \nnorm{T(A)}_q \ .
\end{align}
The difficulty in dealing with these norms lies in the fact that for $p< \infty$ the $p$-norm does not respect $k$-locality, e.g., $\lVert A\otimes \1_{n\times n} \rVert_1=n\lVert A \rVert_1$.
This problem is overcome by using the Lindblad form of the strictly $k$-local Liouvillians.
In the end, all bounds can be stated in terms of the largest operator norm $\norm{X_t}_\infty$ of the Lindblad operators 
$X\in \mcL_\Lambda$ of the strictly $k$-local terms. 
The notation $X\in \mcL_\Lambda$ means that $X$ is one of the operators occurring in the Lindblad representation \eqref{lindbladform} of $\mcL_\Lambda$.
From now on we assume that this largest operator norm $a$ is everywhere bounded by a constant of order $1$ and, in particular, independent of $N$, i.e., $a \in \landauO(1)$.
%
%%% ============ Thm. (Full Trotter) ==============

\paragraph{Main result.}
One can always approximate
any dissipative dynamics generated by a $k$-local Liouvillian acting on $N$ subsystems, even allowing for piecewise continuous time dependence, by a suitable Trotter decomposition.
The error made in such a decomposition can be bounded rigorously.

\begin{theorem}[\bf Trotter decomposition of Liouvillian dynamics]\label{fulltrotterthm}
Let $\mcL= \sum_{\Lambda \subset [N]} \mcL_\Lambda$ be a $k$-local Liouvillian that acts on $N$ subsystems with local Hilbert space dimension $d$. Furthermore, let the $\mcL_\Lambda$ be piecewise continuous in time 
with the property that $a=\max_\Lambda \max_{X\in \mcL_\Lambda} \sup_{t\geq 0} \norm{X_t}_\infty \in \landauO(1)$.
Then the error of the Trotter decomposition of a time evolution up to time $\tau$ into $m$ time steps is
\begin{equation}\label{fulltrotter}
	\BNorm{
	T_\mcL(\tau,0)- \prod_{j=1}^m\prod_{\Lambda \subset [N]}
	\T[L_{\mathrm{\Lambda}}]{\tau \tfrac{j}{m}}{\tau \tfrac{j-1}{m}}}
	\leq \frac{c K^2  \tau^2 \e^{b \tau/m}}{m},
\end{equation}
where $c\in \landauO(d^{2k})$, $b\in \landauO(d^k)$, and 
$K \leq N^k$ is the number of strictly $k$-local terms $\mcL_\Lambda \neq 0$. 
This bound holds for any order in which the products over $\Lambda$ are taken.
$\T[L_{\mathrm{\Lambda}}]{\tau \tfrac{j}{m}}{\tau \tfrac{j-1}{m}}$ can be replaced by the propagator 
$\T[L_{\mathrm{\Lambda}}^\av]{\tau \tfrac{j}{m}}{\tau \tfrac{j-1}{m}}
=\exp(\tau/m \mcL_\Lambda^\av)$ 
of the average Liouvillian
\begin{align}
\mcL_\Lambda^\av=
 \frac m \tau \int_{\tau (j-1)/m}^{\tau j/m} \mcL_\Lambda \rmd t
\end{align}
without changing the scaling \eqref{fulltrotter} of the error.
\end{theorem}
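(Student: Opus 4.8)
\emph{Strategy and reduction to one step.} The plan is to reduce the global estimate to a single time step and then resum. Write $h=\tau/m$ and $t_j=jh$. The composition law for propagators gives the exact factorization $\T[L]{\tau}{0}=\T[L]{t_m}{t_{m-1}}\cdots\T[L]{t_1}{t_0}$, and I abbreviate one Trotter step as $S_j\coloneqq\prod_{\Lambda\subset[N]}\T[L_\Lambda]{t_j}{t_{j-1}}$. Replacing the exact propagators by the Trotter steps one factor at a time yields a telescoping sum in which the one-step defect is flanked by compositions of propagators generated by Lindbladians. Since those are CPT and hence contractions in the $(1\rightarrow1)$-norm, the triangle inequality gives
\begin{equation}
  \BNorm{\T[L]{\tau}{0}-\prod_{j=1}^m S_j}\le\sum_{j=1}^m\Norm{\T[L]{t_j}{t_{j-1}}-S_j}.
\end{equation}
It therefore suffices to establish the one-step (``spatial Trotter'') bound $\Norm{\T[L]{t_j}{t_{j-1}}-S_j}\le c\,K^2 h^2\e^{bh}$, since summing over the $m$ steps reproduces \eqref{fulltrotter}.

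\emph{The one-step defect.} For the single step I would interpolate between $\T[L]{t_j}{t_{j-1}}$ and $S_j$ using Duhamel's formula. Because the $\mcL_\Lambda$ are only piecewise continuous in time, I avoid differentiating them and instead work with the Volterra form $\T[L]{t_j}{t_{j-1}}=\id+\int_{t_{j-1}}^{t_j}\mcL_u\,\rmd u+\dots$, and likewise for each local factor. The decisive observation is that the terms linear in $h$ coincide: expanding $S_j$ to first order gives $\id+\sum_\Lambda\int\mcL_{\Lambda,u}\,\rmd u=\id+\int\mcL_u\,\rmd u$, which is exactly the first-order part of $\T[L]{t_j}{t_{j-1}}$. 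Hence the defect is of second order in $h$, and Duhamel's formula writes it as $\int_{t_{j-1}}^{t_j}\T[L]{t_j}{u}\,(\mcL_u-\mc{G}_u)\,P(u,t_{j-1})\,\rmd u$, where $P(u,t_{j-1})=\prod_\Lambda\T[L_\Lambda]{u}{t_{j-1}}$ and $\mc{G}_u$ is the generator of this ordered product. The flanking factors are CPT, hence contractive, while $\mcL_u-\mc{G}_u$ is a sum over pairs of local terms — at most $K^2$ of them — that vanishes at $u=t_{j-1}$ and grows linearly in $u-t_{j-1}$; this produces the factor $K^2 h^2$.

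\emph{The main obstacle.} Two estimates require genuine care. First, the pair contributions must be bounded uniformly in $N$. As emphasized above, the Schatten $1$-norm, and with it the $(1\rightarrow1)$-norm, does not respect $k$-locality, so one cannot naively multiply $(1\rightarrow1)$-norms of the $\mcL_\Lambda$ and expect an $N$-independent result. The remedy is to estimate each local term through the operator norms $\iNorm{X}\le a$ of its Lindblad operators $X\in\mcL_\Lambda$ in \eqref{lindbladform}; since the operator norm is stable under tensoring with the identity on the remaining subsystems, this is uniform in $N$ and gives $\Norm{\mcL_\Lambda}=\landauO(d^k a^2)$ (one commutator term of order $a$ and $d^k$ dissipators each of order $a^2$). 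A single local generator then sets the scale $b=\landauO(d^k)$ and a product of two the scale $c=\landauO(d^{2k})$, exactly the stated dependencies. The second point, which I expect to be the real crux, is the exponential prefactor: it must carry the single-term scale $b=\landauO(d^k)$ and not a $K$-dependent exponent. Unlike the unitary case of Ref.\ \cite{HuyRae90}, where all propagators have norm one and no exponential appears, the dissipative backward propagators hidden in $\mc{G}_u$ grow like $\e^{bh}$, and a crude accounting would multiply $K$ of them to produce $\e^{Kbh}$. Avoiding this requires organizing the Duhamel estimates so that the contractive CPT propagators are kept intact and only a single local backward evolution enters each remainder, using the contractivity of the CPT propagators together with the locality of the generators.

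\emph{The averaged Liouvillian.} Finally, replacing $\T[L_\Lambda]{t_j}{t_{j-1}}$ by $\exp(h\,\mcL_\Lambda^\av)$ does not change the scaling. Both maps share the same first-order term $\int_{t_{j-1}}^{t_j}\mcL_{\Lambda,u}\,\rmd u=h\,\mcL_\Lambda^\av$ and differ only at second order, so a Duhamel comparison bounds their $(1\rightarrow1)$-distance by $\landauO(d^{2k}h^2)$ per local term. Summed over the $K$ terms of one step and the $m$ steps, this adds at most $\landauO(K\,d^{2k}\tau^2/m)$, which is subleading in $K$ relative to \eqref{fulltrotter}; hence the overall error retains the stated scaling, completing the plan.
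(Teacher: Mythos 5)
Your overall architecture — telescoping over the $m$ time steps via \eqref{usualtriangleinequalitytrick} and contractivity of CPT maps, controlling local generators through the operator norms of their Lindblad operators rather than $(1\to1)$-norms, and a first-order comparison for the averaged generator — matches the paper's. The genuine gap is in your one-step bound. You write the defect as $\int_{t_{j-1}}^{t_j}\T[L]{t_j}{u}\,(\mcL_u-\mc{G}_u)\,P(u,t_{j-1})\,\rmd u$ with $\mc{G}_u$ the generator of the full ordered product, and you yourself flag that the exponential prefactor is the crux — but you never supply the ``organization'' that avoids $\e^{Kbh}$, and with your choice of $\mc{G}_u$ it is not available. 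Indeed $\mc{G}_u=\sum_\Lambda C_\Lambda(u)\,\mcL_{\Lambda,u}\,C_\Lambda(u)^{-1}$, where $C_\Lambda(u)$ is the partial product of the factors ordered after $\Lambda$, so that $\Norm{\mcL_u-\mc{G}_u}=\bigl\Vert\sum_\Lambda[\mcL_{\Lambda,u},C_\Lambda(u)]\,C_\Lambda(u)^{-1}\bigr\Vert_{1\to1}$ unavoidably picks up the norm of the inverse $C_\Lambda(u)^{-1}$, a composition of up to $K-1$ backward local propagators, each of norm up to $\e^{bh}$. Executed as written, your scheme yields $cK^2\tau^2\e^{(K-1)b\tau/m}/m$, strictly weaker than \eqref{fulltrotter}: the theorem's exponent $b\in\landauO(d^k)$ is $K$-independent, and correspondingly \eqref{numbertsteps} only requires $m\geq\tau b/\ln 2$, not $m\gtrsim K\tau b$.

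The missing idea is to replace the one-shot comparison by an \emph{iterative pairwise splitting}: split off a single strictly $k$-local term at a time. For $\mcK$ strictly $k$-local and $\mcL$ $k$-local with at most $K$ terms, the exact identity \eqref{longesteq} expresses $\T[K+L]ts-\T[K]ts\T[L]ts$ as a double integral in which every forward factor is CPT (norm one) and the only non-contractive factors are the two backward propagators $\iT[K]{u}{s}$ and $\iT[K]{r}{u}$ of the \emph{single} split-off term $\mcK$; hence the exponential is $\e^{2(t-s)\sup_v\Norm{\mcK_v}}\leq\e^{bh}$ with $b\in\landauO(d^k)$ independent of $K$, while the commutator $[\mcK_u,\mcL_r]$ contributes the factor $\landauO(d^{2k})\,K$. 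Iterating this within each time step (again via \usualtriangleinequalitytrick with all remaining factors CPT) accumulates at most $\sum_{i=1}^{K}c\,h^2\e^{bh}(K-i)\leq cK^2h^2\e^{bh}$ per step, and summing over the $m$ steps gives \eqref{fulltrotter}. This is exactly where your phrase ``only a single local backward evolution enters each remainder'' must be cashed out, and it requires abandoning the global $\mc{G}_u$. Your averaged-Liouvillian paragraph is sound, modulo one point you should make explicit: $\mcL_\Lambda^\av$ is itself a Liouvillian (the paper argues via continuity of $\exp$ and closedness of the set of CPT maps), so $\exp(h\,\mcL_\Lambda^\av)$ is CPT and the contractivity used in the telescoping still applies.
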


All constants are calculated explicitly in \theApndx.
The supremum in $a$ can be replaced by suitable time averages over the time steps such that 
$\norm{X_t}_\infty$ can be large for small times. Before we turn to the proof of this result, we discuss important implications.

\begin{implication}[\bf Dissipative Church-Turing theorem] 
$\ $\\
Time dependent Liouvillian dynamics can be simulated efficiently in the standard unitary circuit model.
\end{implication}
Using the Stinespring dilation \cite{Stinespring}, each of the $Km$ propagators 
$\T[L_{\mathrm{\Lambda}}]{\tau \tfrac{j}{m}}{\tau \tfrac{j-1}{m}}$ can be implemented as a unitary $U^j_\Lambda$ acting on the subsystem $\Lambda$ and an ancilla system of size at most $d^{2k}$.
These unitaries can be decomposed further into circuits $\tilde U^j_\Lambda$ of at most 
$n=\landauO(\log^\alpha(1/\epsilon_{\text{SK}}))$ gates from a suitable gate set 
using the Solovay-Kitaev algorithm \cite{DawNie06} with $\alpha <4$ 
such that 
$\nnorm{U^j_\Lambda-\tilde U^j_\Lambda}_\infty \leq \epsilon_{\text{SK}}$.
Note that for pure states, we have 
$\tkw 2 \nnorm{U \ketbra \psi \psi U\ad - \tilde U \ketbra \psi \psi \tilde U\ad}_1
\leq \nnorm{U - \tilde U}_\infty \leq \epsilon_{\text{SK}}$
and the $1$-norm is nonincreasing under partial trace.
The full error is bounded by the error from the Trotter approximation \eqref{fulltrotter} 
plus the one arising from the Solovay-Kitaev decomposition, in $(1\rightarrow 1)$-norm bounded by
$K m \epsilon_{\text{SK}}$.

At this point a remark on the appropriate degree of generality of the above result is in order. 
The proven result applies to dynamics under arbitrary piecewise continuous time dependent $k$-local Liouvillians.
It does not include non-Markovian dynamics as often resulting from strong couplings.
However, not only this result, but no dissipative Church-Turing theorem, can or should cover such a situation:
Including highly non-Markovian dynamics would mean to also include extreme cases such as an evolution implementing a swap gate that could write the result of an incredibly complicated process happening in the huge environment into the system. 
In such an intertwined situation it makes only limited sense to speak of the time evolution of the system alone in the first place.
On the other hand, in practical simulations of non-Markovian dynamics, where the influence of
memory effects is known, pseudo\-modes can be included \cite{Imamoglu}, 
thereby rendering the above results again applicable.

It has been shown recently \cite{PouQarSom11} that the set of states that can be reached from a fixed pure reference state 
by $k$-local, time dependent Hamiltonian dynamics is exponentially smaller than the set of all pure quantum states.
In fact, a more general statement holds true \seealsoepaps:

\begin{implication}[\bf Limitations of efficient state generation]\label{Imp:epsnets}
Let $X_\tau^\rho$ be the set of states resulting from the time evolution of an arbitrary initial state $\rho$ under all possible (time dependent) $k$-local Liouvillians up to some time $\tau$. For times $\tau$ that are polynomial in the system size, the relative volume of $X_\tau^\rho$ (measured in the operational metric induced by the $1$-norm) is exponentially small.
\end{implication}

Finally, Theorem~\ref{fulltrotterthm} also provides 
a rigorous error bound for the simulation of local time dependent Liouvillian dynamics on a classical computer. Even though classical simulation of quantum mechanical time evolution is generally believed to be hard in time, we have the following result.

\begin{implication}[\bf Simulation on classical computers \cite{LRTrotter}]\label{imp:CalssicalSim}
  For systems with short-range interactions and fixed time $\tau$, the evolution of 
  local observables can be simulated on classical 
  computers with a cost independent of the system-size and arbitrary precision, e.g., 
  using a variant of the \emph{time-dependent density matrix renormalization 
    group} method.
\end{implication}

To approximate the evolution of local observables in the Heisenberg picture, one applies the Hilbert-Schmidt adjoint, $\prod_{j=m}^1\prod_{\Lambda \subset [N]} T\ad_{\mcL_\Lambda}(\tau \tfrac{j}{m},\tau \tfrac{j-1}{m})$, of the Trotter approximation of the propagator to the observable.
Lieb-Robinson bounds can be used to prove that dissipative dynamics under short-range Liouvillians is \emph{quasi-local} and that the time evolution of local observables is restricted to a causal cone \cite{LRTrotter}.
Channels of the Trotter circuit that lie outside the causal cone have only a negligible effect and can be removed.
The error for the resulting approximation to the time evolved observable measured in the $\infty$-norm, is system-size independent.

This establishes a mathematically sound
foundation for simulation techniques based on Trotter decomposition that have 
previously been used without proving that the approximation is actually 
possible; see, e.g., Refs.~\cite{simulation}.

Recently, CPT maps like the local channels in the Trotter decomposition \eqref{fulltrotter} have even been implemented in the lab \cite{lab}.

\paragraph{Proof of theorem \ref{fulltrotterthm}.}
We now turn to the proof of the main result.
First we will find $(1 \rightarrow 1)$-norm estimates (i) for $T$ and (ii) for $T^-$ which will be used frequently. 
In the next step (iii) we derive a product formula, which we use iteratively (iv) to prove the Trotter decomposition. Finally, (v) we show how the second claim of the theorem concerning the approximation with the average Liouvillian can be proven.
Throughout the proof we consider times $t\geq s\geq 0$.

(i) 
Because any CPT map $T$ maps density matrices to density matrices, we have $\Norm{T}\geq 1$.
In Ref.\ \cite{Wat05} it is shown that 
\begin{align}
	\Norm{T}= \sup_{A=A\ad, \norm{A}_1=1} \norm{T(A)}_1 
\end{align}
for any CPT map $T$. Any self-adjoint operator
$A=A_+ - A_-$ can, by virtue of its spectral decomposition, be written as the difference of a positive and negative part $A_\pm\geq 0$. Since $T$ is CPT, 
$\norm{T(A_\pm)}_1=\tr(T(A_\pm))=\norm{A}_1$, hence $\Norm{T}\leq 1$, and finally 
$\Norm{T}=1$.

(ii)
For any Liouvillian $\mcK$ the propagator $\T[K]ts$ is invertible and the inverse $\iT[K] t s=(\T[K]ts)^{-1}$ is the unique solution of
\begin{align} \label{odeiT}
\ddt T^-(t,s) = -T^-(t,s) \mcK_t \ , \quad T^-(s,s)=\id \ .
\end{align}
From the representation of $T^-$ as a reversely time-ordered exponential, the inequality 
\begin{align}
	\Norm{\iT[\mcK]ts} \leq \exp (\int_s^t\Norm{\mcK_r} \rmd r ) 
\end{align}
follows.
This can be proved rigorously with the ideas from Ref.~\cite{DolFri77} \seealsoepaps.

For the case where $\mcK$ is strictly $k$-local, we use its Lindblad representation and the inequality 
$\Vert A \rho B\Vert_1 
	\leq \Vert A\Vert_\infty  \Vert \rho \Vert_1  \Vert B\Vert_\infty$
to establish $\Norm{\mcK} \in \landauO(d^k)$ and hence
 $ \Norm{\iT[\mcK]ts} \leq \e^{b(t-s)}$, 
with $b \in \landauO(d^k)$.

(iii) In the first step we use similar techniques as the ones being used for the unitary case \cite{HuyRae90} where differences of time evolution operators are bounded in operator norm by commutators of Hamiltonians.
Applying the fundamental theorem of calculus twice, one can obtain for any two Liouvillians $\mcK$ and~$\mcL$ 
\begin{widetext}
\begin{align}\nonumber
\T[K+L]ts -\T[K]ts \T[L]ts &=
\T[K]ts \T[L]ts
\int_s^t  \iT[L]rs \int_s^r  \frac{\rmd}{\rmd u} 
\left(\iT[K]us \mcL_r \T[K]us \right) \iT[K]rs \T[K+L]rs \, \rmd u \, \rmd r
\\ \label{longesteq}
&=\int_s^t  \int_s^r  \T[K]ts \T[L]tr \iT[K]us [\mcK_u, \mcL_r] \iT[K]ru \T[K+L]rs \, \rmd u\, \rmd r \ .
\end{align}
\end{widetext}
In the next step we take the $(1\to 1)$-norm of this equation, use the triangle inequality, 
employ submultiplicativity of the norm, and use (i) and (ii) to obtain
$\int_s^t  \int_s^r \Norm{ [\mcK_u, \mcL_r]  }\rmd u\, \rmd r$ as an upper bound.
In the case where $\mcK$ and $\mcL$ are strictly $k$-local 
$\Norm{[\mcK_u, \mcL_r]} \in \landauO(d^{2k})$, which follows by the same arguments used in (ii) to bound $\Norm{\mcK}$.
In the case where $\mcL$ is only $k$-local with $K$ terms, $\Norm{[\mcK_u, \mcL_r]}$ is increased by at most the factor $K$ such that
\begin{equation}
\Norm{\T[K+L]ts -\T[K]ts \T[L]ts} 
\in \landauO((t-s)^2 \e^{b(t-s)} d^{2k} K )  . \label{productformula}
\end{equation}

(iv) 
The propagator can be written as 
\begin{align} \label{Tdecomp}
T_\mcL(\tau,0) = \prod_{j=1}^m T_{\mcL}(  \tau j/m, \tau (j-1)/m) \ .
\end{align}
Using the inequality
\begin{align}\label{usualtriangleinequalitytrick}
\nnorm{T_1T_2 -\tilde T_1 \tilde T_2} \leq 
\nnorm{T_1}\nnorm{T_2 - \tilde T_2} + \nnorm{T_1 - \tilde T_1} \nnorm{\tilde T_2}
\end{align}
and Eq.\ \eqref{productformula} iteratively, one can establish the result as stated in Eq.~\eqref{fulltrotter}.

(v) 
For any strictly $k$-local Liouvillian $\mcK$ the propagator $\T[K]{t}{s}$ can be approximated by the propagator of the average Liouvillian,
\begin{equation} \label{avformula}
\nNorm{\T[K]ts-  \exp ( \int_s^t \mcK_r  \rmd r )}
 = \tfrac 13 b (t-s)^2 \ .
\end{equation}
This can be shown using the techniques described above by lifting the proof from Ref.~\cite{PouQarSom11} to the 
dissipative case \seealsoepaps.
A comparison of Eq.~\eqref{avformula} with Eq.~\eqref{productformula} shows that the error introduced by using the average Liouvillian is small compared to the error introduced by the product decomposition and does not change the scaling of the error.

\paragraph{Conclusion.}
In this work we show that under reasonable assumptions the dynamics of open quantum systems can be simulated efficiently by a circuit of local quantum channels in a Trotter-like decomposition. 
This channel circuit can further be simulated by a unitary quantum circuit with polynomially many gates from an arbitrary universal gate set. 
As a corollary it follows that the dissipative model of quantum computation is no more powerful than the standard unitary circuit model. 
The result can also be employed for simulations on classical computers and in the physically relevant case where the Liouvillian only has short-range interactions the simulation of local observables can be made efficient in the system size.
It also shows that systems considered in the context of dissipative phase transitions \cite{VerWolCir09,Zoller} 
can be simulated in both of the above senses. 
The result can be seen as a quantum Church-Turing theorem in the sense that under reasonable and necessary requirements any general time evolution of an open quantum system can be simulated efficiently on a quantum computer.

 \paragraph{Acknowledgements.} 
 This work was supported by the EU (Qessence, Minos, COQUIT, Compas), the BMBF (QuOReP), the 
 EURYI, the Niels Bohr International Academy, the German National Academic Foundation, and the Perimeter Institute. We would like to thank M.\ P.\ M\"{u}ller and T.\ Prosen for discussions.

%%% ============================================================
%%% ======================= APPENDIX ===========================
%%% ============================================================

%\clearpage
%\break
%\vspace{3\baselineskip}
\appendix

\section{\textbf{\textsc{Appendix}}}
%\section{\it \large Appendix}
In this appendix we elaborate on some of the technical aspects of our results and give explicit expressions for all involved constants.
First, we give a detailed derivation of the error caused by the Trotter approximation for the time evolution under a time dependent $k$-local Liouvillian.
Along the way we also derive a completely general bound for the Trotter error for arbitrary (not necessarily $k$-local) time dependent Liouvillians, which we don't need directly for the statements made in the paper, but which could be of interest independently as the bounds of our more specialized theorem is not optimal in certain situations.
Secondly, we present a detailed derivation of the error that is made when the time evolution under the time dependent Liouvillian is replaced by that of the average Liouvillian on a small time step.
Finally, we prove results on the scaling behavior of $\epsilon$-nets used in Implication~\ref{Imp:epsnets} to argue that only an exponentially small subset of states can be prepared with time dependent $k$-local Liouvillian dynamics in polynomial time from a fixed reference state. Our argument lifts the considerations from Ref.~\cite{PouQarSom11} to the space of density matrices and the physically relevant trace distance.

\section{Trotter approximation for time dependent Liouvillians}
We start by giving a detailed proof that for short time intervals it is possible to approximate the time evolution of a $k$-local time dependent Liouvillian $\mcK+\mcL$ by splitting off a strictly $k$-local part $\mcK$ and performing the time evolution under $\mcL$ and $\mcK$ sequentially.

\begin{theorem}[\bf Product decomposition of propagators] \label{prodthm}
Let $\mcL$ and $\mcK$ be two time dependent Liouvillians that act on the same quantum system of $N$ subsystems with local Hilbert space dimension $d$.
Furthermore, let $\mcK$ be strictly $k$-local and let $\mcL$ be $k$-local consisting of $K$ strictly $k$-local terms $\mcL_\Lambda$.
For $t\geq s$ the Trotter error is given by
\begin{align}
\Norm{\T[K+L]{t}{s}-\T[K]{t}{s} \T[L]{t}{s}} \leq (t-s)^2 \e^{b(t-s)}c\, K \ ,
\end{align}
where 
\begin{align*}
b&=\constb,
\\
c&=\constc,
\\
a&= \max_\Lambda \max_{X \in \mcK \cup \mcL_\Lambda} \sup_{s \leq v \leq t} \norm{X_v}_\infty .
\end{align*}
\end{theorem}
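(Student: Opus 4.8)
The plan is to turn the scheme sketched in parts~(i)--(iv) of the proof of Theorem~\ref{fulltrotterthm} into a quantitative statement, carrying all constants explicitly; indeed this theorem is precisely the explicit-constant version of the estimate \eqref{productformula}. I would begin with the two $(1\to1)$-norm estimates. Because $\T[K]ts$, $\T[L]ts$, and $\T[K+L]ts$ are CPT, part~(i) gives each of them unit norm, $\Norm{\T[K]ts}=\Norm{\T[L]ts}=\Norm{\T[K+L]ts}=1$. For the inverse propagators I would combine \eqref{odeiT} with the reversely time-ordered exponential representation to get $\Norm{\iT[K]ts}\leq\exp\bigl(\int_s^t\Norm{\mcK_r}\,\rmd r\bigr)$, and then bound $\Norm{\mcK_r}$ by inserting the Lindblad form \eqref{lindbladform} of the strictly $k$-local $\mcK$ and applying $\nnorm{A\rho B}_1\leq\nnorm{A}_\infty\,\nnorm{\rho}_1\,\nnorm{B}_\infty$ to each term of $\mcD[L_{\mu}]$ and to the Hamiltonian commutator. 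This is where the failure of the $1$-norm to respect $k$-locality is circumvented, and it is what fixes $b=\constb$ and hence $\Norm{\iT[K]ts}\leq\e^{b(t-s)}$.

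Next I would reconstruct the exact product formula \eqref{longesteq}. Factoring out $\T[K]ts\,\T[L]ts$ and applying the fundamental theorem of calculus once in an outer variable $r$ isolates a single integral over $\mcL_r$; applying it again in an inner variable $u$ and differentiating the conjugated Liouvillian yields the key identity
\begin{align}
\frac{\rmd}{\rmd u}\bigl(\iT[K]us\,\mcL_r\,\T[K]us\bigr)=\iT[K]us\,[\mcL_r,\mcK_u]\,\T[K]us ,
\end{align}
which follows from \eqref{odeT} and \eqref{odeiT}. The boundary terms of the inner integration then cancel the contributions that are only first order in $\mcL$, leaving the nested commutator structure of \eqref{longesteq}. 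This operator-calculus identity is the conceptual heart of the argument, but it is essentially forced once one recognizes that the leading Trotter error must be a commutator.

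Finally I would take the $(1\to1)$-norm of \eqref{longesteq}, pass it inside the double integral via the triangle inequality, and use submultiplicativity together with the estimates above: the three CPT factors contribute $1$ each and the two inverse propagators contribute at most $\e^{b(u-s)}\e^{b(r-u)}=\e^{b(r-s)}\leq\e^{b(t-s)}$, leaving $\Norm{[\mcK_u,\mcL_r]}$ in the integrand. Writing $\mcL=\sum_\Lambda\mcL_\Lambda$ and bounding each strictly $k$-local commutator $\Norm{[\mcK_u,\mcL_{\Lambda,r}]}$ by the same Lindblad-plus-H\"older argument yields $\Norm{[\mcK_u,\mcL_r]}\leq cK$ with $c=\constc$; the remaining integral $\int_s^t\!\int_s^r\rmd u\,\rmd r=(t-s)^2/2\leq(t-s)^2$ then produces the claimed bound.

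I expect the main obstacle to be not the operator-calculus identity but the faithful bookkeeping of $b$ and $c$. Each use of $\nnorm{A\rho B}_1\leq\nnorm{A}_\infty\nnorm{\rho}_1\nnorm{B}_\infty$ on the double-sided and (anti)commutator terms of the dissipators $\mcD[L_{\Lambda,\mu}]$, summed over the $d^k$ Lindblad operators per site and over the Hamiltonian part, has to be tallied carefully to land on the stated powers of $a$ and $d^k$ rather than on the looser $\landauO(d^k)$ and $\landauO(d^{2k})$ bounds; tracking these numerical prefactors through the commutator estimate is the genuinely tedious part.
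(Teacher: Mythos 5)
Your proposal follows essentially the same route as the paper's proof: the paper establishes the identical exact commutator-integral identity and norm estimates (its Theorem~\ref{generaltrotter}, built on the contraction property $\Norm{T}=1$ for CPT maps and the bound $\Norm{T^-_\mcK(t,s)}\leq\exp\bigl(\int_s^t\Norm{\mcK_r}\,\rmd r\bigr)$ proved via product integrals), and then obtains $b$ and $c$ by exactly the Lindblad-plus-H\"older counting you describe (its Lemma~\ref{klocalandstrictlyklocalnormbounds}), so your outline is correct. The only discrepancy is factor-of-$2$ bookkeeping: the paper's lemma yields $\Norm{[\mcL_r,\mcK_u]}\leq 2\,c_{r,u}K$ and spends the $1/2$ coming from the double integral, whereas you claim $\Norm{[\mcK_u,\mcL_r]}\leq cK$ outright and then discard that $1/2$ --- either accounting lands on the stated bound, and likewise your telescoped estimate $\e^{b(u-s)}\e^{b(r-u)}\leq\e^{b(t-s)}$ for the two inverse propagators is a slightly sharper version of the paper's cruder step of bounding each by the full-interval exponential.
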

We will use this theorem iteratively to bound the error caused by decomposing the propagators of arbitrary $k$-local Liouvillians into the propagators of the individual strictly $k$-local terms.

The proof of this theorem can be presented most conveniently as a series of Lemmas.
From the main text (point~(i) in the proof of the Theorem, page~3) we already know that completely positive and trace preserving (CPT) maps are contractive:
\begin{lemma}[Contraction property of the propagator] \label{normlemma1} $ \ $ \\
  Let $T$ be a CPT map. Then $\Norm{T}=1$.
\end{lemma}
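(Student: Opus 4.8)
The plan is to establish $\Norm{T}=1$ by sandwiching the norm between a trivial lower bound and a matching upper bound. For the lower bound I would note that any density operator $\rho$ satisfies $\norm{\rho}_1=1$ and, since $T$ is CPT, $T(\rho)$ is again a density operator, so $\norm{T(\rho)}_1=1$. Because density operators form a subset of the unit sphere of the Schatten $1$-norm, taking the supremum in the definition of the $(1\to1)$-norm over all of that sphere immediately yields $\Norm{T}\geq 1$.

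For the upper bound the crucial reduction is to restrict the optimisation to self-adjoint inputs. A priori the supremum defining $\Norm{T}$ runs over all operators $A$ with $\norm{A}_1=1$, but for any Hermiticity-preserving map — and in particular for a CPT map — it suffices to optimise over Hermitian $A$; this is exactly the characterisation $\Norm{T}=\sup_{A=A\ad,\,\norm{A}_1=1}\norm{T(A)}_1$ established in Ref.~\cite{Wat05}. Granting this, I would take a self-adjoint $A$ with $\norm{A}_1=1$ and use its spectral decomposition to write $A=A_+-A_-$ with $A_\pm\geq 0$ and $\norm{A}_1=\tr(A_+)+\tr(A_-)$. The triangle inequality then gives $\norm{T(A)}_1\leq\norm{T(A_+)}_1+\norm{T(A_-)}_1$, and since complete positivity makes $T(A_\pm)\geq 0$ while trace preservation gives $\tr(T(A_\pm))=\tr(A_\pm)$, one has $\norm{T(A_\pm)}_1=\tr(T(A_\pm))=\tr(A_\pm)$. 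Summing, $\norm{T(A)}_1\leq\tr(A_+)+\tr(A_-)=\norm{A}_1=1$, so $\Norm{T}\leq 1$, and combining the two bounds closes the argument.

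The only genuinely nontrivial ingredient is the reduction to Hermitian inputs. For a general superoperator the supremum over self-adjoint operators can be strictly smaller than the full $(1\to1)$-norm, so one cannot simply assume the optimal input is Hermitian; this is precisely where Hermiticity preservation (implied by complete positivity) of $T$ is needed, and I would import that fact from Ref.~\cite{Wat05} rather than reprove it here. Everything else is an elementary consequence of the spectral theorem together with the defining properties — positivity and trace preservation — of a CPT map.
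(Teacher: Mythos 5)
Your proof is correct and takes essentially the same route as the paper's: the lower bound from $T$ mapping density operators to density operators, the reduction to self-adjoint inputs via the characterisation $\Norm{T}=\sup_{A=A\ad,\,\norm{A}_1=1}\norm{T(A)}_1$ imported from Ref.~\cite{Wat05}, and the spectral split $A=A_+-A_-$ combined with complete positivity and trace preservation to get $\Norm{T}\leq 1$. If anything, your explicit triangle-inequality step $\norm{T(A)}_1\leq\tr(A_+)+\tr(A_-)=\norm{A}_1$ is slightly more careful than the paper's compressed chain $\norm{T(A_\pm)}_1=\tr(T(A_\pm))=\norm{A}_1$.
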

We also need to bound the norm of the inverse propagator.
\begin{lemma}[Backward time evolution] \label{normlemma2}
  For $t \geq s $:
\begin{enumerate}
\item[(i)]
$\T[L]ts$ 
is invertible and its inverse is $\iT[L]ts$ as defined by Eq.\ \odeiT in the main text.
\item[(ii)] If the Liouvillian $\mcL$ is piecewise continuous in time then
  \begin{equation}
    \Norm{\iT[L]{t}{s}} \leq \exp(\int_s^t \Norm{\mcL_r}\, \rmd r )    \ .
  \end{equation}
  \end{enumerate}
\end{lemma}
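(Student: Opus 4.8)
The plan is to prove the two claims in turn, both resting on the initial value problems \odeT and \odeiT that define the forward and backward propagators. For part~(i) I would first secure a unique solution $\iT[L]ts$ of \odeiT from Carath\'eodory existence theory, handling the merely piecewise continuous time dependence of $\mcL$ by solving on each interval of continuity and concatenating across the finitely many breakpoints, where the propagator remains continuous. The decisive algebraic step is to form $S(t)\coloneqq \iT[L]ts\,\T[L]ts$ and differentiate it: by \odeT and \odeiT the product rule gives $\ddt S(t) = -\iT[L]ts\,\mcL_t\,\T[L]ts + \iT[L]ts\,\mcL_t\,\T[L]ts = 0$, and since $S(s)=\id$ we conclude $S(t)=\id$. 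Hence $\iT[L]ts$ is a left inverse of $\T[L]ts$; as both are superoperators on the finite-dimensional space $\mc B(\mc H)$, a left inverse is automatically two-sided, which proves invertibility and identifies the inverse with $\iT[L]ts$.

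For part~(ii) I would rewrite \odeiT in integral form as $\iT[L]ts = \id - \int_s^t \iT[L]rs\,\mcL_r\,\rmd r$, take the $(1\rightarrow 1)$-norm, and combine the triangle inequality with submultiplicativity to obtain the scalar estimate $\Norm{\iT[L]ts} \leq 1 + \int_s^t \Norm{\iT[L]rs}\,\Norm{\mcL_r}\,\rmd r$. Gr\"onwall's inequality---applicable because $r\mapsto\Norm{\iT[L]rs}$ is continuous and $r\mapsto\Norm{\mcL_r}$ is integrable---then delivers the claimed bound directly. Alternatively, one may expand $\iT[L]ts$ as a reverse time-ordered exponential and bound its Dyson series term by term, the $n$-th nested time-ordered integral contributing $\tfrac{1}{n!}\bigl(\int_s^t\Norm{\mcL_r}\,\rmd r\bigr)^n$ after symmetrizing over the ordering of the integration variables, which resums to the exponential; this is the route taken in Ref.~\cite{DolFri77}.

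The main obstacle is analytic rather than algebraic: the piecewise continuity of $\mcL$ obstructs a naive appeal to the classical smooth ODE theory, so care is needed to guarantee that \odeiT has a unique, absolutely continuous solution despite the jumps of the generator and that the product-integral (Dyson) representation actually converges. I would resolve this either by the glueing argument indicated above or, more cleanly, by invoking the product-integral framework of Ref.~\cite{DolFri77}, which is tailored to exactly such generators. Once the existence and regularity of $\iT[L]ts$ are established, the cancellation identity of part~(i) and the Gr\"onwall estimate of part~(ii) both follow without further difficulty.
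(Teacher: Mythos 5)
Your proof is correct, but it follows a genuinely different route from the paper's. The paper handles both parts through the product-integral formalism of Ref.~\cite{DolFri77}: it identifies $\T[L]ts$ with the product integral $\prod_s^t \exp(\mcL_r\,\rmd r)$ (property~1 of that reference), obtains invertibility from property~3, and then, using continuity of matrix inversion, represents $\iT[L]ts$ as a \emph{reversely ordered} product integral, which is bounded factor by factor via $\Norm{\exp(-\mcL_{r_j}\Delta r_j)}\leq\exp(\Norm{\mcL_{r_j}}\Delta r_j)$ and resummed into the exponential of a Riemann integral. You instead argue directly at the level of the defining ODEs: for~(i), after securing existence and uniqueness of the solution of \odeiT from Carath\'eodory theory, you differentiate $S(t)=\iT[L]ts\,\T[L]ts$ to find $\ddt S=0$, so $S\equiv\id$, and a left inverse of a linear map on the finite-dimensional space $\mc B(\mc H)$ is automatically two-sided; for~(ii), you pass to the integral form of \odeiT and apply Gr\"onwall's inequality. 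Your version is more elementary and self-contained, dispensing with the product-integral machinery entirely, whereas the paper's route delivers the explicit reversely-time-ordered-exponential representation of $\iT[L]ts$ as a byproduct (a representation the main text quotes and which your Dyson-series alternative reproduces, with the identical term-by-term resummation). Both treatments dispose of the mere piecewise continuity of $\mcL$ in the same way, by solving on each interval of continuity and gluing via the composition property $\iT[L]uv\,\iT[L]vw$ across the finitely many breakpoints, so no gap arises there either.
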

\begin{proof}
First, we consider the case where $\mcL$ is continuous in time and use the theory presented in Ref.~\cite{DolFri77} and in particular the ``properties'' which are proven in this reference.
The \emph{product integral} of $\mcL$ is defined analogously to the Riemann integral,
  \begin{align}
    \prod_s^t \exp(\mcL_r \, \rmd r) \coloneqq
    \lim_{\Delta r_j \rightarrow 0 \, \forall j} \prod_{j=1}^J \exp(\mcL_{r_j} \Delta r_j) \ ,
  \end{align}
  where $\prod_{j=1}^{J} X_j\coloneqq X_J X_{J-1}\ldots X_1$.
Since $\T[L]ts$ solves the initial value problem in Eq.~\odeT from the main text, $\T[L]{t}{s}= \prod_s^t \exp(\mcL_r \, \rmd r)$ which is exactly the statement of property 1. 

(i)
Property 3 precisely states that a product integral is invertible. It is not hard to see that the inverse of $\T[L] ts$ solves the initial value problem \odeiT from the main text.

(ii) 
The inverse propagator is
\begin{align}
\iT[L] t s = \left(\prod_s^t\exp(\mcL_r\, \rmd r)\right)^{-1} \ .
\end{align}
%%%%
Since matrix inversion is continuous, 
  \begin{equation}\label{iTasreversely}
  \iT[L] t s   = 
    \lim_{\Delta r_j \rightarrow 0 \, \forall j} \prod^{1}_{j=J} \exp(-\mcL_{r_j} \Delta r_j) .
  \end{equation}
We call this the \emph{reversely ordered product integral} and use the convention 
$\prod_{j=J}^{1} X_j\coloneqq X_1 X_2 \dots X_J$.
Using the submultiplicativity of the $(1 \to 1)$-norm and the triangle inequality we \nopagebreak obtain from Eq.~\eqref{iTasreversely} 
\begin{align} 
\Norm{\iT[L] ts} & \leq
\lim_{\Delta r_j \rightarrow 0 \, \forall j} \prod^{1}_{j=J}\exp(\Norm{\mcL_{r_j}} \Delta r_j)\\
= &
\exp(\lim_{\Delta r_j \rightarrow 0 \, \forall j} \sum_{j=1}^{J} \Norm{\mcL_{r_j}} \Delta r_j)
\end{align}
%\pagebreak
The definition of the Riemann integral finishes the proof for the continuous case.

If $\mcL$ is only piecewise continuous in time then (i) and (ii) hold for all the intervals where 
$\mcL$ is continuous and from that and the composition property $\T[L]uv \T[L] vw=\T[L]uw$ ($u\geq v \geq w$) it follows that (i) and (ii) hold on the whole time interval $[s,t]$.
\end{proof}
%\pagebreak
With these tools at hand we can now prove a bound on the Trotter error of two arbitrary (not necessarily $k$-local) time dependent Liouvillians.
\vspace{5\baselineskip}

\begin{widetext}
\begin{theorem}[\bf General Trotter error]\label{generaltrotter}
  For two arbitrary time dependent Liouvillians $\mcK$ and $\mcL$ the Trotter error is given by
  \begin{align}
    \Norm{\T[K+L]{t}{s}-\T[K]{t}{s} \T[L]{t}{s}} 
%\nonumber\\
     \label{eq:arbitraryLindbladianstrottererroraverage}
    \leq & \int_s^t\int_s^r
    \Norm{[\mcK_u,\mcL_r]}
    \,\rmd u\, \rmd r \ 
    \e^{2 \int_s^t\Norm{\mcK_v}\, \rmd v}      \\
    \label{eq:arbitraryLindbladianstrottererror}
    \leq &\tkw 2 (t-s)^2 \,
    \sup_{t\geq r \geq u \geq s} \Norm{[\mcK_u,\mcL_r]}\, 
    \exp\Bigl({2(t-s) \sup_{t\geq v \geq s} \Norm{\mcK_v} }\Bigr)      \ .
\end{align}
\end{theorem}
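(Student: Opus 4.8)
The plan is to treat the statement as a pure estimation problem, taking as its starting point the exact integral representation of the Trotter defect already recorded as Eq.~\eqref{longesteq}:
\begin{equation*}
\T[K+L]{t}{s}-\T[K]{t}{s}\T[L]{t}{s}=\int_s^t\int_s^r \T[K]{t}{s}\,\T[L]{t}{r}\,\iT[K]{u}{s}\,[\mcK_u,\mcL_r]\,\iT[K]{r}{u}\,\T[K+L]{r}{s}\,\rmd u\,\rmd r ,
\end{equation*}
which follows from applying the fundamental theorem of calculus twice together with the backward equation~\eqref{odeiT}. Since this is an identity, all of the work lies in bounding the $(1\to1)$-norm of its right-hand side.

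First I would move the norm inside the double integral via $\Norm{\int(\argdot)}\le\int\Norm{\argdot}$; this is justified because the piecewise continuity of $\mcK$ and $\mcL$ renders the superoperator-valued integrand piecewise continuous and hence integrable. Then submultiplicativity of the induced $(1\to1)$-norm bounds the norm of the six-fold product by the product of the factor norms. Three factors, $\T[K]{t}{s}$, $\T[L]{t}{r}$ and $\T[K+L]{r}{s}$, are genuine propagators and therefore CPT maps, so by Lemma~\ref{normlemma1} each has $(1\to1)$-norm exactly $1$ and drops out. The two inverse propagators are controlled by Lemma~\ref{normlemma2}(ii): $\Norm{\iT[K]{u}{s}}\le\exp(\int_s^t\Norm{\mcK_v}\,\rmd v)$ and $\Norm{\iT[K]{r}{u}}\le\exp(\int_s^t\Norm{\mcK_v}\,\rmd v)$, where I bound the respective windows $[s,u]$ and $[u,r]$ crudely by the full interval $[s,t]$. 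Their product yields the constant $\e^{2\int_s^t\Norm{\mcK_v}\,\rmd v}$, which is independent of $u$ and $r$ and pulls out of the integral, leaving $\Norm{[\mcK_u,\mcL_r]}$ inside; this is exactly the bound~\eqref{eq:arbitraryLindbladianstrottererroraverage}.

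To descend to the coarser estimate~\eqref{eq:arbitraryLindbladianstrottererror} I would make three elementary replacements: bound $\Norm{[\mcK_u,\mcL_r]}$ by its supremum over the ordered region $t\ge r\ge u\ge s$, evaluate $\int_s^t\int_s^r \rmd u\,\rmd r=\tfrac12(t-s)^2$, and estimate $\int_s^t\Norm{\mcK_v}\,\rmd v\le (t-s)\sup_{t\ge v\ge s}\Norm{\mcK_v}$ in the exponent. Assembling these gives the stated form directly.

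The genuinely delicate ingredient, were one to build everything from scratch, is the exact identity~\eqref{longesteq} itself: differentiating the reversely time-ordered exponential representing $\iT[K]{u}{s}$ and manipulating it under the integral requires the product-integral calculus underlying Lemma~\ref{normlemma2} and care with the merely piecewise-continuous time dependence. Given that identity and the two norm lemmas, the remaining estimation is routine bookkeeping. The only point worth flagging is the treatment of the two inverse-propagator windows: combining them would in fact give the sharper exponent $\int_s^t\Norm{\mcK_v}\,\rmd v$, so I deliberately use the looser factor-of-two bound to obtain the clean constant stated in the theorem. I would also confirm integrability of the superoperator-valued integrand to legitimize interchanging norm and integral.
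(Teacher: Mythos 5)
Your proposal is correct and takes essentially the same route as the paper's own proof: the exact double-integral identity obtained from a twofold application of the fundamental theorem of calculus, followed by submultiplicativity of the $(1\to1)$-norm, Lemma~\ref{normlemma1} for the three CPT factors, and Lemma~\ref{normlemma2} for the two inverse propagators, each bounded crudely over the full interval to give $\e^{2\int_s^t\Norm{\mcK_v}\,\rmd v}$. Your side remark that merging the two backward windows $[s,u]$ and $[u,r]$ would yield the sharper single exponent $\e^{\int_s^t\Norm{\mcK_v}\,\rmd v}$ is accurate, and the paper likewise settles for the looser factor-of-two constant.
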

%\begin{widetext}
\begin{proof}
  We use a similar argument as in Ref.~\cite{HuyRae90}.
  With the fundamental theorem of calculus we obtain 
    \begin{align}\nonumber
      \iT[L]{t}{s} \iT[K]{t}{s} \T[K+L]{t}{s} - \id
      &= \int_s^t \del_r \bigl(
      \iT[L]{r}{s} \iT[K]{r}{s} \T[K+L]{r}{s} \bigr) \, \rmd r
      \\ \nonumber
      &= \int_s^t \iT[L]{r}{s} [\iT[K]{r}{s}, \mcL_r] \T[K+L]{r}{s} \, \rmd r
      \\ \nonumber
      &= \int_s^t \iT[L]{r}{s} \bigl(\iT[K]{r}{s} \mcL_r \T[K]{r}{s} - \mcL_r \bigr) \iT[K]{r}{s} \T[K+L]{r}{s} \, \rmd r
      \\ \nonumber
      &= \int_s^t \iT[L]{r}{s}
      \int_s^r \frac{\rmd}{\rmd u} \Bigl( \iT[K]{u}{s} \mcL_r \T[K]{u}{s} \Bigr)\rmd u\,
      \iT[K]{r}{s} \T[K+L]{r}{s}
      \, \rmd r
      \\ \nonumber
      &= \int_s^t \int_s^r \iT[L]{r}{s}
      \iT[K]{u}{s} [\mcL_r,\mcK_u] \T[K]{u}{s}
      \iT[K]{r}{s} \T[K+L]{r}{s}\, \rmd u\, \rmd r \ .
    \end{align} 
    Multiplying with $\T[K]{t}{s} \T[L]{t}{s}$ from the left yields 
    \begin{align}
      \T[K+L]{t}{s}- \T[K]{t}{s}\T[L]{t}{s}=\int_s^t\int_s^r
      \T[K]{t}{s} \T[L]{t}{r} \iT[K]{u}{s} [\mcL_r,\mcK_u] \iT[K]{r}{u} \T[K+L]{r}{s}
      \,\rmd u\, \rmd r.
    \end{align}
With submultiplicativity of the $(1\to 1)$-norm and the bounds on the norms of the forward and backward propagators from Lemma~\ref{normlemma1} and \ref{normlemma2} the result follows.
\end{proof}
\end{widetext}

To complete the proof of Theorem~\ref{prodthm} one needs to bound the norms $\Norm{[\mcL_r,\mcK_u]}$ and $\Norm{\mcK_r}$ in \eqref{eq:arbitraryLindbladianstrottererror} for the special case that $\mcK$ is strictly $k$-local and $\mcL$ is $k$-local with $K$ strictly $k$-local terms.%
%\\
%\pagebreak

\begin{lemma} \label{klocalandstrictlyklocalnormbounds}
Let $\mcK$ and $\mcL$ be two Liouvillians which act on the same operator space of $N$ subsystems with local Hilbert space dimension $d$. Furthermore, let $\mcK$ be strictly $k$-local and $\mcL$ be $k$-local consisting of $K$ strictly $k$-local terms $\mcL_\Lambda$.
Then
  \begin{align}
    \label{eq:klocalandstrictlyklocalnormbounds1}
    &           & 2 \Norm{\mcK_v}        &\leq b_v \\
    \label{eq:klocalandstrictlyklocalnormbounds2}
    &\text{and} & \tfrac{1}{2} \Norm{[\mcL_r,\mcK_u]} &\leq c_{r,u} K  \ ,
  \end{align}
  where $b_v=4a_v + 8d^k a_v^2$, $c_{r,u}=2a_ra_u+4(a_ra_u^2+a_r^2a_u) d^k +16 a_r^2a_u^2 d^{2k}$, and 
  $a_t= \max_\Lambda\max\{\norm{X_t}_\infty : X \in \mc K \cup \mcL_\Lambda \}$.
\end{lemma}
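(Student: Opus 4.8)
The plan is to derive both inequalities directly from the Lindblad form \eqref{lindbladform} together with the elementary estimate $\Vert A\rho B\Vert_1\leq \iNorm{A}\,\nnorm{\rho}_1\,\iNorm{B}$ already used in the main text, invoking nothing beyond the triangle inequality and submultiplicativity. Since $\Norm{S}=\sup_{\nnorm{\rho}_1=1}\nnorm{S(\rho)}_1$, it suffices throughout to bound $\nnorm{S(\rho)}_1$ for a fixed $\rho$ with $\nnorm{\rho}_1=1$.

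For \eqref{eq:klocalandstrictlyklocalnormbounds1} I would write $\mcK_v=-\i[H,\argdot]+\sum_{\mu=1}^{d^k}\mcD[L_\mu]$ and bound the pieces separately. The commutator term gives $\nnorm{[H,\rho]}_1\leq 2\iNorm{H}\leq 2a_v$, while each dissipator obeys $\nnorm{\mcD[L_\mu](\rho)}_1\leq 2\iNorm{L_\mu}^2+2\iNorm{L_\mu\ad L_\mu}\leq 4a_v^2$, the first summand coming from $2L_\mu\rho L_\mu\ad$ and the second from the anticommutator. Summing the $d^k$ dissipators yields $\Norm{\mcK_v}\leq 2a_v+4d^k a_v^2$, which is exactly $b_v/2$.

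The estimate \eqref{eq:klocalandstrictlyklocalnormbounds2} is the substantial one, and I would attack it in two stages. First, writing $\mcL=\sum_\Lambda\mcL_\Lambda$ and using bilinearity of the commutator with the triangle inequality gives $\Norm{[\mcL_r,\mcK_u]}\leq\sum_\Lambda\Norm{[\mcL_{\Lambda,r},\mcK_u]}\leq K\max_\Lambda\Norm{[\mcL_{\Lambda,r},\mcK_u]}$, which isolates the promised factor $K$ and reduces everything to a commutator of two strictly $k$-local Liouvillians. Second, I would split each Liouvillian into its Hamiltonian and dissipative parts and expand $[\mcL_\Lambda,\mcK]$ into the four pieces Ham--Ham, Ham--Diss, Diss--Ham, and Diss--Diss, bounding each according to its locality budget. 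The Ham--Ham piece I would collapse with the Jacobi identity, $[[H^M,\argdot],[H^K,\argdot]]=[[H^M,H^K],\argdot]$, so that it costs only $\leq 4a_ra_u$, i.e.\ the order-$d^0$ term; the naive product bound $2\Norm{\argdot}\,\Norm{\argdot}$ would be wasteful here. The Diss--Diss piece, a double sum over the $d^{2k}$ pairs of dissipators, I would instead bound crudely by $2\Norm{\mcD[L^M_\nu]}\,\Norm{\mcD[L^K_\mu]}\leq 32\,a_r^2a_u^2$ per pair, which reproduces the order-$d^{2k}$ term.

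The cross terms are where I expect the real work, and exploiting the right cancellation is essential to keep the constants sharp. I would use that $-\i[H,\argdot]$ acts as a derivation, so that in $[-\i[H,\argdot],\mcD[L]]$ the contributions in which $H$ hits the argument $\rho$ cancel, leaving $2(-\i[H,L])\rho L\ad+2L\rho(-\i[H,L\ad])-\{-\i[H,L\ad L],\rho\}$. Each surviving term carries exactly one commutator $[H,\argdot]$, which is what holds the estimate at order $a_ra_u^2$ instead of doubling it under a blunt submultiplicative bound. Bounding these terms one by one and summing over the $d^k$ Lindblad operators of the opposite Liouvillian gives the two order-$d^k$ contributions, $\landauO(a_ra_u^2 d^k)$ from Ham--Diss and $\landauO(a_r^2a_u d^k)$ from Diss--Ham; collecting the four pieces then assembles a bound of exactly the stated form $c_{r,u}K$, with the precise numerical constants emerging from the term-by-term accounting. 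The main obstacle is precisely this bookkeeping: the Jacobi collapse and the derivation cancellation are needed to keep the coefficients as tight as claimed, and one must track carefully which operator norm, $a_r$ or $a_u$, attaches to which factor so that the asymmetry between the two cross terms comes out correctly.
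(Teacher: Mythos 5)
Your proposal follows essentially the same route as the paper's own (very terse) appendix proof: there, too, both generators are written in Lindblad form with at most $d^k$ Lindblad operators each, bound \eqref{eq:klocalandstrictlyklocalnormbounds1} is obtained by counting terms with $\nnorm{A\rho B}_1\leq\iNorm{A}\,\nnorm{\rho}_1\,\iNorm{B}$, bound \eqref{eq:klocalandstrictlyklocalnormbounds2} by ``writing out the commutator'' of two strictly $k$-local generators, and the factor $K$ is extracted exactly as you do, by the triangle inequality over the $K$ terms $\mcL_\Lambda$ -- your Jacobi collapse of the Ham--Ham block is precisely the step needed to reach the coefficient $2a_ra_u$, and your derivation-cancellation formula $2(-\i[H,L])\rho L\ad+2L\rho(-\i[H,L\ad])+\i\{[H,L\ad L],\rho\}$ for the cross block is correct. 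One bookkeeping caveat: those three surviving cross terms bound to $(4+4+4)\,a_ra_u^2$ per dissipator, so your accounting lands at a middle coefficient $6(a_ra_u^2+a_r^2a_u)d^k$ in $c_{r,u}$ rather than the stated $4(a_ra_u^2+a_r^2a_u)d^k$ -- a harmless discrepancy, since only the orders $b\in\landauO(d^k)$ and $c\in\landauO(d^{2k})$ enter the theorems, and the paper's own ``one can verify'' offers no visibly tighter derivation of that constant.
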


\begin{proof}
  First, let both Liouvillians be strictly $k$-local.
  Hence each of them can be written with at most $d^k$ Lindblad operators.
  Let the Lindblad representations of $\mcK$ and $\mcL$ be
  \begin{align}
    \label{eq:sumrepresentationofmcL}
     \mc K &= -\i [G, \argdot] + \sum_{\nu=1}^{d^k} \mcD[K_\nu] \\
    \intertext{and}
     \mcL &= -\i [H, \argdot] + \sum_{\mu=1}^{d^k} \mcD[L_\mu]  \ ,
  \end{align} 
  where $\mcD[X](\rho) \coloneqq 2 X \rho X\ad - \{ X\ad X , \rho \}$.
  Inequality \eqref{eq:klocalandstrictlyklocalnormbounds1} follows from counting the number of terms in 
  \eqref{eq:sumrepresentationofmcL} and using that 
  $\norm{A \rho B}_1 \leq \norm{A}_\infty \norm{\rho}_1 \norm{B}_\infty$.
  Similarly, by writing out the commutator $[\mcK, \mcL]$ and using the above representations one can verify that $[\mcK_r, \mcL_u] \leq 2a_ra_u+4(a_ra_u^2+a_r^2a_u) d^k +16 a_r^2a_u^2 d^{2k}$.
  If $\mcL = \sum_{\Lambda\subset [N]} \mcL_\Lambda$ is $k$-local with $K$ terms the bound is increased by at most a factor of $K$.
\end{proof}

Theorem~\ref{prodthm} follows as a corollary of Theorem~\ref{generaltrotter} and Lemma~\ref{klocalandstrictlyklocalnormbounds} by inserting the suprema of the bounds~\eqref{eq:klocalandstrictlyklocalnormbounds1} and~\eqref{eq:klocalandstrictlyklocalnormbounds2} into Eq.~\eqref{eq:arbitraryLindbladianstrottererror}.
Instead of using suprema in the step from Eq.~\eqref{eq:arbitraryLindbladianstrottererroraverage} to Eq.~\eqref{eq:arbitraryLindbladianstrottererror} one can take averages over $b_v$ and $c_{r,u}$ to obtain a better, but more complicated bound.
One can also improve the scaling of the error with the size of the time steps by using higher order Trotter schemes as in Ref.~\cite{HuyRae90} (time dependent case) or Ref.~\cite{suzuki} (time constant case).

\section{Approximation by the average Liouvillian}
In the product formula in our Theorem~\fulltrotterthm in the main text 
one can replace the time ordered integrals $\T[L_\mathrm{\Lambda}]ts$ by ordinary exponentials of the time averaged Liouvillians. This is not essential to our argument concerning the quantum Church-Turing thesis, but makes the result more useful for applications.
The additional error caused by doing this is bounded in the following theorem:
\begin{theorem}[\bf Approximation by the average Liouvillian] \label{avlem}
Let $\mcK$ be a strictly $k$-local Liouvillian acting on an operator space with local Hilbert space dimension $d$.
Then for any $t\geq s$
\begin{align}
\Norm{T_\mcK(t,s) - \exp((t-s)\mcK^\av)} 
& \leq\tfrac{1}{3} b (t-s)^2 \ ,
\end{align}
where the average Liouvillian
\begin{align}
\mcK^\av\coloneqq \kw{t-s} \int_s^t \mcK_r \, \rmd r
\end{align}
is indeed a Liouvillian, $b=\constb$, and 
$a= \max_{X_t \in \mcK}\sup_{t}  \norm{X_t}_\infty$.
\end{theorem}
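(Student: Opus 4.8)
The plan is to mirror the fundamental-theorem-of-calculus computation used in Theorem~\ref{generaltrotter}, but now comparing the time-ordered propagator $\T[K]ts$ with the propagator $\exp((t-s)\mcK^\av)$ of the constant, averaged generator, rather than with a product of two propagators. First I would dispose of the side claim that $\mcK^\av=\frac{1}{t-s}\int_s^t\mcK_r\,\rmd r$ is again a Liouvillian: generators of Lindblad form constitute a convex cone, since the Hamiltonian part $-\i[H_r,\argdot]$ averages to $-\i[\tfrac{1}{t-s}\int_s^t H_r\,\rmd r,\argdot]$ and each dissipator $\mcD[\argdot]$ is completely positive, a property stable under integration. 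Hence $\exp((t-s)\mcK^\av)$ is CPT, so by Lemma~\ref{normlemma1} it is contractive, $\Norm{\exp((t-s)\mcK^\av)}=1$, exactly as is $\T[K]tr$.

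Applying the fundamental theorem of calculus to the interpolating family $r\mapsto \T[K]tr\,\exp((r-s)\mcK^\av)$, whose endpoints are $\T[K]ts$ and $\exp((t-s)\mcK^\av)$, and using $\del_r\T[K]tr=-\T[K]tr\,\mcK_r$ together with $\del_r\exp((r-s)\mcK^\av)=\mcK^\av\exp((r-s)\mcK^\av)$, yields the exact identity
\begin{equation*}
\exp((t-s)\mcK^\av)-\T[K]ts=\int_s^t \T[K]tr\,(\mcK^\av-\mcK_r)\,\exp((r-s)\mcK^\av)\,\rmd r .
\end{equation*}
The decisive feature is that the naive triangle-inequality estimate of this integral only produces a bound linear in $t-s$, because it discards the cancellation $\int_s^t(\mcK^\av-\mcK_r)\,\rmd r=0$ built into the definition of the average. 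As in Eq.~\eqref{longesteq}, the quadratic order is recovered by a second application of the fundamental theorem of calculus: writing $\mcK^\av-\mcK_r=\frac{1}{t-s}\int_s^t(\mcK_u-\mcK_r)\,\rmd u$ and differentiating the conjugated generator converts the difference $\mcK_u-\mcK_r$ into a commutator $[\mcK_u,\mcK_r]$ flanked by forward and backward propagators. Estimating the propagators with Lemmas~\ref{normlemma1} and~\ref{normlemma2} and the commutator with the strictly $k$-local bound of Lemma~\ref{klocalandstrictlyklocalnormbounds}, the remaining double time integral $\int_s^t\int_s^r\rmd u\,\rmd r=\tfrac12(t-s)^2$ fixes the $(t-s)^2$ scaling, with the prefactor set by that commutator bound to give a bound of the advertised form $\tfrac13 b(t-s)^2$.

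The main obstacle is precisely this first-order cancellation: the whole content of the theorem lies in organising the second differentiation so that the time-ordering of $\T[K]{}{}$, $\iT[K]{}{}$ and $\exp((\argdot)\mcK^\av)$ is respected and the mismatch between the instantaneous generator $\mcK_r$ and its average $\mcK^\av$ becomes a genuine commutator rather than a difference of norms; this is what separates the $\landauO((t-s)^2)$ estimate from the trivial $\landauO(t-s)$ one. A technically cleaner variant, which avoids differentiating the only piecewise-continuous $\mcK_r$ (whose rate of change is unbounded by assumption), is to integrate the identity by parts against the antiderivative $N(r)=\int_s^r(\mcK^\av-\mcK_u)\,\rmd u$: since $N(s)=N(t)=0$ the boundary terms drop out, $N$ satisfies $\Norm{N(r)}\in\landauO\!\big(b\,(r-s)(t-r)/(t-s)\big)$, and the two surviving integrands are again controlled by the contractivity of the propagators, reproducing an $\landauO((t-s)^2)$ bound. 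Finally, comparing with Eq.~\eqref{productformula} I would note that this average-Liouvillian error carries no factor of $K$ and is therefore negligible against the Trotter error of Theorem~\ref{fulltrotterthm}, as claimed below Eq.~\eqref{avformula}.
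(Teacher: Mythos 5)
Your treatment of the side claim is fine in substance (the paper argues that $\mcK^\av$ is a Liouvillian because finite sums and limits of Liouvillians are Liouvillians, which is your convex-cone observation; just note that $\mcD[L]$ is \emph{not} a completely positive map but only conditionally completely positive, so the cone you should invoke is that of Lindblad generators, not of CP maps). The genuine gap is in your main route: the claim that a second application of the fundamental theorem of calculus ``converts the difference $\mcK_u-\mcK_r$ into a commutator $[\mcK_u,\mcK_r]$'' does not go through. In Theorem~\ref{generaltrotter} the commutator arises because the relevant quantity is the difference between a conjugated and an unconjugated copy of the \emph{same} generator, $\iT[K]rs\,\mcL_r\,\T[K]rs-\mcL_r=\int_s^r\tfrac{\rmd}{\rmd u}\bigl(\iT[K]us\,\mcL_r\,\T[K]us\bigr)\rmd u$, so that differentiating the conjugation produces $[\mcL_r,\mcK_u]$. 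Here, by contrast, the difference is between the generator at two \emph{different times}. Differentiating the flanking propagators in, say, $F(v)=\T[K]tv\,\mcK_u\,T_{\mcK^\av}(v,s)$ yields $\T[K]tv\bigl(\mcK_u\mcK^\av-\mcK_v\mcK_u\bigr)T_{\mcK^\av}(v,s)$, a difference of products rather than a commutator, and the only other way to manufacture $[\mcK_u,\mcK_r]$ would be to differentiate $\mcK$ in time, which is unavailable since the Liouvillian is merely piecewise continuous with unbounded rate of change. The constants also betray the misfire: a commutator route through Lemma~\ref{klocalandstrictlyklocalnormbounds} and the backward-propagator bound of Lemma~\ref{normlemma2} would produce a prefactor of order $d^{2k}$ (the constant $c$, not $b\in\landauO(d^k)$) together with an exponential factor $\e^{\landauO(b(t-s))}$, neither of which appears in the statement. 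No commutator structure is needed or available here: the smallness is due to the averaging cancellation, not to commutation, which is precisely why the bound carries no $K$ and no exponential.

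Your fallback variant, however, is correct and is essentially the paper's proof in different clothes. The paper starts from the same exact identity, inserts $\mcK_u-\mcK^\av=\tkw{t-s}\int_s^t(\mcK_u-\mcK_r)\,\rmd r$, symmetrizes the double integral under $u\leftrightarrow r$, and then bounds the resulting \emph{propagator differences} via $\Norm{\T[K]us-\T[K]rs}\leq\bigl|\int_u^r\Norm{\mcK_v}\rmd v\bigr|$, Eq.~\eqref{invav}, using only the contractivity of the two CPT propagators (Lemma~\ref{normlemma1}) --- no backward propagators, hence no exponential. Your integration by parts against $N(r)=\int_s^r(\mcK^\av-\mcK_u)\rmd u$, with $N(s)=N(t)=0$, performs exactly the same cancellation bookkeeping: after the boundary terms drop one gets $\int_s^t\T[K]tr\bigl(\mcK_r N(r)-N(r)\mcK^\av\bigr)T_{\mcK^\av}(r,s)\rmd r$, whence $\Norm{N(r)}\leq b\min(r-s,t-r)$ gives an error at most $b\int_s^t\Norm{N(r)}\rmd r\leq\tfrac14 b^2(t-s)^2$, and it has the additional merit of never touching the time-regularity of $\mcK$. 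Be aware, though, that carried out honestly both your variant and the paper's own chain of estimates produce a prefactor of order $b^2$, so your assertion that the commutator bound ``gives the advertised form $\tfrac13 b(t-s)^2$'' is unsubstantiated on two counts; the discrepancy with the stated constant is a matter of the paper's bookkeeping, not of the method, and in either case the conclusion drawn below Eq.~\eqref{avformula} --- that the averaging error is subleading relative to the Trotter error \eqref{productformula} --- stands.
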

\begin{proof}
We lift the proof from Ref.~\cite{PouQarSom11} to the dissipative setting.
Let $t\geq s $ be fixed. Applying the fundamental theorem of calculus and the definition of $\mcK^\av$, we obtain
\begin{widetext}
\begin{align*}
\T[\mcK^\av]ts-T_\mcK(t,s)&= -	\T[K]ts \int_s^t T^-_\mcK(u,s) \left(\mcK_u - \mcK^\av \right) T_{\mcK^\av}(u,s) \, \rmd u \\
&= - \kw{t-s} \int_s^t \int_s^t \T[K]tu \left(\mcK_u - \mcK_r \right) T_{\mcK^\av}(u,s) \, \rmd r\, \rmd u\\
&=  - \kw{t-s} \int_s^t \int_s^t \Bigl(
 \T[K]tu \mcK_u T_{\mcK^\av}(u,s)
-\T[K]tr \mcK_u T_{\mcK^\av}(r,s) 
\Bigr) \rmd r\, \rmd u \ .
\end{align*}
The inequality in Eq.\ \usualtriangleinequalitytrick from the main text yields
\begin{equation}
  \begin{split}
    \label{TTap}
\Norm{\T[\mcK^\av]ts-T_\mcK(t,s)} 
\leq 
\frac{1}{t-s} \int_s^t\int_s^t 
& \Norm{\mcK_u}\Bigl( \Norm{\T[K]tu-\T[K]tr} \Norm{T_{\mcK^\av}(u,s)} \\ 
&+ \Norm{\T[K]tr} \Norm{T_{\mcK^\av}(u,s)-T_{\mcK^\av}(r,s)} \Bigr) \rmd r\, \rmd u \ .    
  \end{split}
\end{equation}
\end{widetext}
From $\T[K]{u}{s} - \T[K]{r}{s} = -\int_r^u \T[K]{v}{s} \mcK_v \, \rmd v$, Lemma~\ref{normlemma1}, and the submultiplicativity of the norm we know that for $t\geq \nolinebreak u,r \geq s $
\begin{equation}\label{invav}
 \Norm{\T[K]us-\T[K]rs}  \leq  \left| \int_u^r \Norm{\mcK_v} \rmd v \right|  
\end{equation}
and similarly for $\mcK^\av$. With \eqref{TTap} we obtain 
\begin{multline}
\Norm{T_{\mcK^\av}(t,s)-T_\mcK(t,s)} \\
    \leq\,  2 \int_s^t \int_s^t \left| \int_u^r \Norm{\mcK_v} \rmd v \right| \rmd r \, \rmd u \ . 
\end{multline}
It remains to show that $\mcK^\av$ is a Liouvillian, i.e., that $\exp(t\mcK^\av)$ is a CPT map for all $t\geq 0$. First of all, finite sums of Liouvillians are Liouvillians. Furthermore, 
limits of sequences of Liouvillians are Liouvillians since the exponential function is continuous and the set of CPT maps is closed.
\end{proof}

\section{Efficiently preparable states constitute an exponentially small subset of state space}
In the following we will argue that for every fixed initial state, the time evolution for a time interval of length $\tau$ under any (possibly time dependent) $k$-local Liouvillian yields a state that lies inside of one of $N_T$ $\epsilon$-balls in trace distance. 
For times $\tau$ which are polynomial in the system size, $N_T$ is exponentially smaller than the cardinality of any $\epsilon$-net (in trace distance) that covers the state space $\mc S$. The case of Hamiltonian dynamics and state vectors is investigated in Ref.~\cite{PouQarSom11}.
It will be convenient to use the Bachmann-Landau symbols $\landauO$ and $\Omega$ for 
asymptotic upper and lower bounds up to constant factors.

By using Theorem~\ref{fulltrotterthm} of the main text, which provides an error bound for the Trotter approximation of a Liouvillian time evolution, together with the Stinespring dilation \cite{Stinespring} and the Solovay-Kitaev algorithm \cite{DawNie06}, one obtains the following:

\begin{theorem}[\bf Number of channel circuits]\label{thm:numbercircuits}
The propagator from time $0$ to time $\tau$, generated by any $k$-local time dependent Liouvillian acting on $N$ subsystems with local Hilbert space dimension $d \in \landauO(1)$ can be approximated in $(1\to 1)$-norm to accuracy $\epsilon>0$ with one out of $N_T$ channel circuits, where
\begin{align}
\log(N_T) \in \landauO\left(\frac{N^{3k+2} \tau^4}{\epsilon^5} \right) \ .
\end{align}
 \end{theorem}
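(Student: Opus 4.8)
The plan is to combine the three tools announced before the statement --- the Trotter decomposition of Theorem~\ref{fulltrotterthm}, the Stinespring dilation, and the Solovay--Kitaev algorithm --- and then to reduce the counting of inequivalent circuits to counting gate sequences over a fixed universal gate set. Concretely, I would show that every $k$-local propagator $T_\mcL(\tau,0)$ can be brought $\epsilon$-close in $(1\to1)$-norm to some element of an explicitly constructed finite family of channel circuits, and bound the cardinality $N_T$ of that family.

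First I would fix the number $m$ of Trotter slices so that the error in \eqref{fulltrotter} is at most $\epsilon/2$; by \eqref{numbertsteps} this is achieved with $m\in\landauO(K^2\tau^2/\epsilon)$, producing a product of $Km$ strictly $k$-local CPT maps. Each such channel acts on at most $d^k$ dimensions, so by Stinespring dilation it is a unitary on a space of dimension at most $d^k\cdot d^{2k}=d^{3k}$. I would then approximate each dilation unitary to precision $\epsilon_{\mathrm{SK}}$ in operator norm by a Solovay--Kitaev circuit of $n\in\landauO(\log^\alpha(1/\epsilon_{\mathrm{SK}}))$ gates from a fixed universal set (with $\alpha<4$). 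The per-channel $\infty$-norm error is turned into a $(1\to1)$-norm channel error of order $\epsilon_{\mathrm{SK}}$ via the pure-state estimate $\tfrac12\nnorm{U\rho U\ad-\tilde U\rho\tilde U\ad}_1\le\iNorm{U-\tilde U}$ together with monotonicity of the $1$-norm under partial trace, exactly as in the main text. Since every CPT map is a $(1\to1)$-contraction by Lemma~\ref{normlemma1}, the $Km$ channel errors add, so the total Solovay--Kitaev error is $\landauO(Km\,\epsilon_{\mathrm{SK}})$; choosing $\epsilon_{\mathrm{SK}}=\Theta(\epsilon/(Km))$ makes this $\le\epsilon/2$ and hence the full error $\le\epsilon$.

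It remains to count. Once $m$ and the supports of the $K$ local terms are fixed, the Trotter circuit has a rigid layout of $Km$ slots, and each slot is specified by one gate sequence of length $n$ over a gate set of fixed size; therefore $N_T$ is bounded by the number of such global gate assignments, giving $\log N_T\in\landauO(nKm)$. Substituting $Km\in\landauO(N^{3k}\tau^2/\epsilon)$ (using $K\le N^k$) and $1/\epsilon_{\mathrm{SK}}\in\landauO(N^{3k}\tau^2/\epsilon^2)$, so that $n$ is polylogarithmic in $N$, $\tau$ and $1/\epsilon$, and absorbing the remaining $d^{\landauO(k)}$ factors into constants ($d\in\landauO(1)$), yields $\log N_T\in\tilde{\landauO}(N^{3k}\tau^2/\epsilon)$. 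In the relevant regime $N,\tau\ge1$, $\epsilon\le1$ this is comfortably no larger than, and hence establishes, the claimed $\log N_T\in\landauO(N^{3k+2}\tau^4/\epsilon^5)$.

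I expect the main difficulty to be bookkeeping of the norms and the error budget rather than any single hard estimate: one must ensure that the Solovay--Kitaev error is controlled in the operator ($\infty$) norm while the overall accuracy is demanded in the operationally meaningful $(1\to1)$-norm, and that the two error sources --- the Trotter truncation and the gate discretization --- are allocated ($\epsilon/2$ each) and composed \emph{additively} without the contraction constants degrading. The load-bearing prior facts are exactly the contractivity of CPT maps (Lemma~\ref{normlemma1}) and the inverse-propagator bound (Lemma~\ref{normlemma2}) feeding Theorem~\ref{fulltrotterthm}; with those in hand the remaining work is the routine substitution of the above scalings.
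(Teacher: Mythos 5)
Your proposal is correct and follows essentially the same route as the paper's own proof: Trotter slicing with $m\in\landauO(K^2\tau^2/\epsilon)$, Stinespring dilation of each strictly $k$-local channel to a $d^{3k}$-dimensional unitary, per-gate Solovay--Kitaev approximation with $\epsilon_{\SK}=\Theta(\epsilon/(Km))$, additive error composition via contractivity of CPT maps, and counting $N_T$ by gate assignments, giving $\log N_T\in\landauO(N_{\SK}\,N^k m)$. The only (cosmetic) difference is the final bookkeeping: you keep the polylogarithmic factor and observe the claimed bound follows a fortiori, whereas the paper absorbs $\log^\alpha$ into a polynomial explicitly via $\log_2^4(x)<x^2$ for $x\geq 18$, splitting the error budget as $\epsilon_1^3\epsilon_2^2$ to arrive at the stated $N^{3k+2}\tau^4/\epsilon^5$ scaling.
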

 
\begin{proof}
According to Theorem~\ref{fulltrotterthm} of the main text, the propagator $T_\mcL(\tau,0)$ of the Liouvillian time evolution can be approximated by a circuit 
 $\prod_{j=1}^m\prod_{\Lambda \subset [N]}T_\Lambda^j$
of at most $N^k m$ strictly $k$-local channels $T_\Lambda^j$ to precision $\epsilon_1$ in $(1\to 1)$-norm, where according to Eq.~\numbertsteps from the main text, 
$m = 2cN^{2k} \tau^2 / \epsilon_1$. 
We have assumed that 
$2 \ln(2) c N^{2k} \tau/\epsilon_1 \geq b$ where $c$ and $b$ are given explicitly in Theorem~\ref{prodthm} and depend only on strictly local properties of the Liouvillian. 
Employing the Stinespring dilation \cite{Stinespring} for each of the channels $T_\Lambda^j$ one obtains a circuit of at most $N^k m$ strictly $3k$-local unitary gates $U_\Lambda^j$. Each $U_\Lambda^j$ acts on an enlarged system composed of the $d^k$-dimensional original subsystem and an ancilla system of dimension $d^{2k}$.
One can use the Solovay-Kitaev algorithm \cite{DawNie06} to approximate every single gate $U_\Lambda^j$ of the unitary circuit by a circuit $\tilde U_\Lambda^j$ of one- and two-qubit gates from a universal gate set of cardinality $n_{\SK} \in \landauO(1)$, e.g., $n_\SK=3$. 
With $N_\SK=c_{\SK}\log^\alpha(1/\epsilon_{\SK})$ of those $n_{\SK}$ standard gates, each unitary $U_\Lambda^j$ can be approximated to accuracy $\epsilon_{\SK}$ introducing a total error $\epsilon_2=N^k m \epsilon_\SK$. 
The constant $c_\SK$ depends on~$d^{3k}$.

Consequently, we have for the dilation $U$ of $\prod_{j=1}^m\prod_{\Lambda \subset [N]}T_\Lambda^j$ an approximation $\tilde U$ with operator norm accuracy $\epsilon_2$, given by a unitary circuit of $N_\text{All gates}=N_\SK N^k m$ standard gates from the universal gate set. 
Note that for any pure state $\ket \psi$, we have 
$\tkw 2 \nnorm{U \ketbra \psi \psi U\ad,\tilde U \ketbra \psi \psi \tilde U\ad}_1
\leq \nnorm{U - \tilde U}_\infty$
and the $1$-norm is non-increasing under partial trace. 
Tracing out the ancillas, we obtain an approximation $\tilde T$ of $\T[L]\tau 0$ with error
$\nNorm{\T[L]\tau 0-\tilde T} \leq \epsilon=\epsilon_1+2\epsilon_2$.
The total number of different channels $\tilde T$, which can arise in this way from the chosen universal gate set, is $N_T\leq {n_\SK}^{N_\text{All gates}}$, i.e., for given $c,\tau,k,N$ and $d$, a number of $N_T$ standard gates are enough to approximate any $\T[L]\tau 0$ in $(1\to 1)$-norm to accuracy $\epsilon$.

To conclude, we bound the order of $N_T$.
\begin{align}\nonumber
\log(N_T) &\leq N_\text{All gates} \log n_\SK\\ \nonumber
&= c_\SK \log^\alpha\left( \frac{2cN^{3k} \tau^2}{\epsilon_1\epsilon_2} \right) \frac{2cN^{3k} \tau^2}{\epsilon_1} \log n_\SK \\
&< c_\SK (3k)^\alpha \log^\alpha\left( \frac{2cN \tau}{\epsilon_1\epsilon_2} \right) \frac{2cN^{3k} \tau^2}{\epsilon_1} \log n_\SK \ .
\end{align}
Since we are interested in the scaling of $\log(N_T)$ for large $N$ and small $\epsilon_1, \epsilon_2$ we can assume that the argument of the logarithm is larger than $18$ and use that $\log_2^4(x)< x^2$ for $x\geq 18$ to obtain
\begin{align}
\log(N_T) &< 
C \frac{N^{3k+2} \tau^4}{\epsilon_1^3 \epsilon_2^2} 
\end{align}
with $C= c_\SK (3k)^\alpha (2c)^3 \log n_\SK$.
\end{proof}

The above theorem shows that the time evolution under a $k$-local Liouvillian can be approximated by one out of $N_T$ many circuits to accuracy $\epsilon$. 
The states that can be reached by any $k$-local Liouvillian time evolution, starting from a fixed initial state, are hence all contained in the union of $N_T$ $\epsilon$-balls (in $1$-norm) around the output states of these circuits.

Let us now determine whether those $\epsilon$-balls can possibly cover the whole state space.
For this purpose we introduce $\epsilon$-nets. 
We consider a $D$-dimensional Hilbert space $\mc H$ and denote 
\begin{compactenum}[(i)]
\item the set of state vectors, i.e., the set of normalized vectors in $\mc H$ by $P\subset \mc H$,
\item the set of density matrices by $\mc S \subset \mc B(\mc H)$, and
\item the set of rank one projectors by $\mc P \subset \mc S$.
\end{compactenum}
For an arbitrary subset $R\subset \mc B(\mc H)$ and some $\epsilon>0$ we call a finite subset $\mc N_\epsilon^p(R) \subset R$ satisfying 
\begin{align}
  \forall a \in R\ \exists b \in \mc N_\epsilon^p(R) : \norm{a-b}_p\leq \epsilon
\end{align}
\emph{an $\epsilon$-net for $R$ in (Schatten) $p$-norm}.
Furthermore, we call an $\epsilon$-net $\hat{\mc{N}}^p_\epsilon(R)$ \emph{optimal} if any other set $X\subset R$ with smaller cardinality $|X|<|\hat{\mc{N}}^p_\epsilon(R)|$ cannot be an $\epsilon$-net for $R$ in $p$-norm.
Similarly, we define $\epsilon$-nets $\mc N_\epsilon^\HS(P)\subset P$ for state vectors in Hilbert space norm and, as before, we denote optimal $\epsilon$-nets by $\hat{\mc N}_\epsilon^\HS(P)$.

In Ref.~\cite{HayLeuWin06} it was shown that for the set of state vectors of a $D$-dimensional quantum system there exist 
$\epsilon$-nets of cardinality at most $|\mc N_\epsilon^\mathrm{HS}(P)| \leq (5/(2 \epsilon))^{2D}$.
As the Hilbert space distance upper bounds \cite{HayLeuShoWin04} the trace distance, 
\begin{align}
|\ket \psi - \ket \phi|_2 & \geq \tkw 2 \norm{\ketbra \psi \psi-\ketbra \phi \phi}_1 \nonumber \\
& = \dist(\ketbra \psi \psi,\ketbra \phi \phi) \ ,
\end{align}
this also implies the existence of $\epsilon$-nets for $\mc P$ in $p$-norm of cardinality $|\mc N_\epsilon^p(\mc P)| \leq (5/\epsilon)^{2D}$ for any $p\geq1$.
By comparing the volume of the $\epsilon$-balls with the volume of the whole set of state vectors one can see that for state vectors this construction is essentially optimal.

\begin{lemma}\label{lem:epsnetscalingP}
 For a $D$-dimensional quantum system 
 \begin{equation}
  |\mc N_\epsilon^\mathrm{HS}(P)| \in  \Omega(\left(\kw \epsilon\right)^{2D-1}) \cap \landauO(\left(\frac{5}{2 \epsilon}\right)^{2D}) \ . 
 \end{equation}
\end{lemma}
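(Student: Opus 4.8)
The upper bound $\landauO((5/(2\epsilon))^{2D})$ is exactly the net cardinality $|\mc N_\epsilon^\HS(P)|\leq(5/(2\epsilon))^{2D}$ already quoted above from Ref.~\cite{HayLeuWin06}, so the only part that still needs an argument is the lower bound $|\mc N_\epsilon^\HS(P)|\in\Omega((1/\epsilon)^{2D-1})$, which I will establish for \emph{every} $\epsilon$-net and hence in particular for the optimal one. The plan is a standard volume (covering) argument. First I would identify the set of state vectors $P$ with the unit sphere $S^{2D-1}$ in real $2D$-dimensional Euclidean space: a normalized vector of the $D$-dimensional complex Hilbert space has $2D$ real coordinates of total length one, and on this sphere the Hilbert-space norm induces precisely the chordal Euclidean distance $\norm{a-b}$. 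An $\epsilon$-net for $P$ in HS norm is therefore nothing but an $\epsilon$-covering of $S^{2D-1}$ in the chordal metric.

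Next I would exploit the covering property directly. By definition of an $\epsilon$-net the closed chordal $\epsilon$-balls centered at the net points cover all of $P$, so taking the $(2D-1)$-dimensional surface measure $\sigma$ and using subadditivity gives
\begin{align}
|\mc N_\epsilon^\HS(P)|\;\geq\;\frac{\sigma(S^{2D-1})}{\sup_b \sigma(C_\epsilon(b))}\ ,
\end{align}
where $C_\epsilon(b)=\{a\in S^{2D-1}:\norm{a-b}\leq\epsilon\}$ is a spherical cap. The numerator is the total surface area $A_{2D-1}=2\pi^{D}/\Gamma(D)$ of the unit $(2D-1)$-sphere, a constant independent of $\epsilon$. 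For the denominator I would convert the chordal radius into a geodesic radius $\theta$ via $\epsilon=2\sin(\theta/2)$, so that $\theta$ and $\epsilon$ are comparable as $\epsilon\to 0$, and bound the cap area from above by that of its flat tangent disk of radius $\theta$, giving $\sup_b\sigma(C_\epsilon(b))\in\landauO(\theta^{2D-1})=\landauO(\epsilon^{2D-1})$. Dividing then yields $|\mc N_\epsilon^\HS(P)|\geq A_{2D-1}/\landauO(\epsilon^{2D-1})\in\Omega(\epsilon^{-(2D-1)})$, which is the claimed lower bound.

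The hard part is the clean cap estimate $\sigma(C_\epsilon)\in\landauO(\epsilon^{2D-1})$ with a constant that is uniform over the small-$\epsilon$ regime. I would have to be careful that the relevant object is the intrinsic $(2D-1)$-dimensional surface measure of the chordal ball intersected with the sphere, not the ambient $2D$-dimensional Lebesgue measure, and to control both the chordal-versus-geodesic comparison coming from $\epsilon=2\sin(\theta/2)$ and the mild curvature correction incurred when the cap is replaced by its tangent disk. Since only the asymptotic $\Omega$/$\landauO$ scaling is needed and not sharp constants, these pieces of spherical geometry can be kept deliberately coarse, so I expect no genuine difficulty beyond bookkeeping.
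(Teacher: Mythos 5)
Your proposal is correct and follows essentially the same route as the paper: identify $P$ with the unit sphere $S^{2D-1}$ under the chordal metric, take the upper bound directly from Ref.~\cite{HayLeuWin06}, and get the lower bound by dividing the total surface area by the maximal area of a chordal $\epsilon$-cap, which is $\landauO(\epsilon^{2D-1})$. The only (harmless) difference is in the cap estimate: you pass to the geodesic radius $\theta=2\arcsin(\epsilon/2)$ and bound the cap by the flat tangent disk (which is rigorous via $\sin\phi\leq\phi$ and costs only a constant factor since $\theta\leq\tfrac{\pi}{2}\epsilon$), while the paper compares the cap directly to the $(2D-1)$-ball of chordal radius $\epsilon$, obtaining the explicit constant $\tfrac{15\pi}{8}$.
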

\begin{proof}
The set of state vectors in a $D$-dimensional Hilbert space is isomorphic to a $(2D-1)$-sphere with radius $1$ in $(2D)$-dimensional real Euclidean space such that the Hilbert space norm $|\argdot|_2$ on state vectors coincides with the Euclidean norm in $\mathbb{R}^{2D}$.
The surface area of a $(n-1)$-sphere of radius $r$ is $S_{n-1}(r) = n C_{n} r^{n-1}$, where $C_n = \pi^{n/2}/\Gamma(n/2+1)$ and $\Gamma$ is the Euler gamma function. 
The set of states within Hilbert space distance $\epsilon$ to a given state is a spherical cap on that sphere with opening angle $4 \arcsin(\epsilon/2)$. For $\epsilon \ll 1$, the area of such a cap is approximately equal to the volume of a $(2D-1)$-ball of radius $\epsilon$.
In fact, a more detailed analysis reveals that for $D=3$ the two are exactly identical and for $D>3$ the cap is always smaller than the $(2D-1)$-ball.
The volume of an $n$-ball of radius $r$ is $V_n(r) = C_n r^n$.
Thus for $D\geq3$,
\begin{align*} %\label{eq:hilbertspacescaling}
  \left(\frac{5}{2 \epsilon}\right)^{2D} 
  &\geq |\mathcal{N}_\epsilon^{\mathrm{HS}}(P)| \geq \frac{S_{2D-1}(1)}{V_{2D-1}(\epsilon)} 
  = \frac{2 D C_{2D}}{C_{2D-1} \epsilon^{2D-1}}
\\
  &= 2\sqrt{\pi} \frac{\Gamma(D+1/2)}{ \Gamma(D)} \left(\frac{1}{\epsilon}\right)^{2D-1} \geq \frac{15 \pi}{8} \left(\frac{1}{\epsilon}\right)^{2D-1} , \nonumber
\end{align*}
where the first inequality follows from Ref.~\cite{HayLeuWin06}.
\end{proof}

This is essentially the argument used in Ref.\ \cite{PouQarSom11} to establish that Hilbert space is a ``convenient illusion".
However, the lower bound on $|\mc{\hat{N}}_\epsilon^{\mathrm{HS}}(P)|$ does not immediately imply a lower bound on $| \mc{\hat{N}}_\epsilon^p(\mc P)|$ (and hence also not for $| \mc{\hat{N}}_\epsilon^p(\mc S)|$) for any $p\geq1$.
In particular, there are states with distance $2$ in Hilbert space norm and distance $0$ in any of the $p$-norms,  namely, any pair of state vectors $\{\ket{\psi}, -\ket{\psi}\}$.

We now show that a similar lower bound as in the last lemma holds for the size of optimal $\epsilon$-nets for $\mc P$ and $\mc S$ in $p$-norm.

\begin{lemma}
  \label{epsilonnets}
  For $p \in \{1,2\}$
  \begin{equation}
    \label{eq:epsilonnets}
   |\mc{\hat{N}}_\epsilon^p(\mc S)| \geq 
   | \mc{\hat{N}}_{2\epsilon}^p(\mc P)| \in \Omega(\left(\frac{1}{4\epsilon}\right)^{2D-3}) .
  \end{equation}
\end{lemma}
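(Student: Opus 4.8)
The plan is to split the chained claim into two parts: (a) the cardinality comparison $\lvert\hat{\mc N}_\epsilon^p(\mc S)\rvert \geq \lvert\hat{\mc N}_{2\epsilon}^p(\mc P)\rvert$, and (b) the lower bound $\lvert\hat{\mc N}_{2\epsilon}^p(\mc P)\rvert \in \Omega((1/(4\epsilon))^{2D-3})$, the latter being reduced to the state-vector estimate of Lemma~\ref{lem:epsnetscalingP}.

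For (a) I would use a \emph{rounding} argument. Starting from an optimal net $\hat{\mc N}_\epsilon^p(\mc S)$, for every net element $b$ whose $\epsilon$-ball meets $\mc P$ I fix one projector $\pi(b)\in\mc P$ with $\nnorm{b-\pi(b)}_p\leq\epsilon$. The resulting set $\{\pi(b)\}\subset\mc P$ has cardinality at most $\lvert\hat{\mc N}_\epsilon^p(\mc S)\rvert$ and is a $2\epsilon$-net for $\mc P$: given $a\in\mc P\subset\mc S$ there is a net point $b$ with $\nnorm{a-b}_p\leq\epsilon$, so $\pi(b)$ is defined and $\nnorm{a-\pi(b)}_p\leq\nnorm{a-b}_p+\nnorm{b-\pi(b)}_p\leq 2\epsilon$ by the triangle inequality. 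Optimality of $\hat{\mc N}_{2\epsilon}^p(\mc P)$ then yields (a).

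For (b) the obstacle is the global phase: state vectors $P$ form a $(2D-1)$-sphere whereas the projectors $\mc P$ quotient out the $U(1)$ phase (recall the example $\ket\psi$ versus $-\ket\psi$ noted above). I would therefore manufacture an $\HS$-net for $P$ out of a $2\epsilon$-net $\{P_1,\dots,P_M\}=\hat{\mc N}_{2\epsilon}^p(\mc P)$, writing $P_i=\ketbra{\psi_i}{\psi_i}$ and attaching to each $P_i$ a grid of $L=\landauO(1/\epsilon)$ phases $e^{\i\theta_1}\ket{\psi_i},\dots,e^{\i\theta_L}\ket{\psi_i}$ with consecutive phases at $\HS$-distance at most $2\epsilon$. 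The conversion between the norms rests on the elementary identities, for $x=\lvert\braket{\psi}{\phi}\rvert$, $\min_\theta\lvert\ket\psi-e^{\i\theta}\ket\phi\rvert_2=\sqrt{2(1-x)}$, $\nnorm{\ketbra\psi\psi-\ketbra\phi\phi}_1=2\sqrt{1-x^2}$ and $\nnorm{\ketbra\psi\psi-\ketbra\phi\phi}_2=\sqrt{2}\sqrt{1-x^2}$; since $1-x\leq 1-x^2$, closeness of two projectors in $1$- or $2$-norm by $2\epsilon$ forces a representative at $\HS$-distance at most $2\epsilon$. Hence any $\ket\phi$ has some $P_i$ with $\nnorm{\ketbra\phi\phi-P_i}_p\leq 2\epsilon$, a phase $\theta^*$ with $\lvert\ket\phi-e^{\i\theta^*}\ket{\psi_i}\rvert_2\leq 2\epsilon$, and a grid phase $\theta_j$ within a further $2\epsilon$, so $\{e^{\i\theta_j}\ket{\psi_i}\}_{i,j}$ is a $4\epsilon$-net for $P$ in $\HS$-norm of cardinality $ML$.

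Finally I would feed this into Lemma~\ref{lem:epsnetscalingP} at scale $4\epsilon$, namely $ML\geq\lvert\hat{\mc N}_{4\epsilon}^{\HS}(P)\rvert\in\Omega((1/(4\epsilon))^{2D-1})$ (valid for $D\geq3$), and divide by $L=\landauO(1/\epsilon)$ to obtain $M\in\Omega((1/(4\epsilon))^{2D-2})\subset\Omega((1/(4\epsilon))^{2D-3})$, which is the advertised bound. I expect the phase bookkeeping in (b) to be the crux: one must add enough phase representatives to obtain a genuine $\HS$-net yet verify that this inflates the count by only a single factor $\landauO(1/\epsilon)$, so that precisely one power of $1/\epsilon$ is surrendered relative to the sphere bound; the uniform constant-factor relations between the $1$-, $2$- and $\HS$-distances of rank-one projectors are what let the cases $p\in\{1,2\}$ be treated together.
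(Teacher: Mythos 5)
Your proposal is correct and follows essentially the same route as the paper: the identical rounding argument (pick, for each net point of $\hat{\mc{N}}_\epsilon^p(\mc S)$ whose $\epsilon$-ball meets $\mc P$, a nearby projector) gives the first inequality, and the identical phase-grid construction converts a $2\epsilon$-net for $\mc P$ into an $\HS$-norm net for $P$, to which Lemma~\ref{lem:epsnetscalingP} at scale $4\epsilon$ is applied. The only deviation is quantitative: your coarser grid of $L\in\landauO(1/\epsilon)$ phases suffices (since $1-\cos\theta$ is second order in the angle, angle resolution $\landauO(\epsilon)$ already matches the phase to accuracy $\landauO(\epsilon^2)$), whereas the paper conservatively takes a $\lceil 1/\epsilon^2\rceil$-point grid, so you in fact obtain the stronger exponent $2D-2$ in place of the stated $2D-3$ — which of course still implies the lemma as claimed.
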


\begin{proof} 
For a given state vector $\ket \psi$ it will be convenient to use the notation $\psi \coloneqq \ketbra \psi \psi$.

We start to prove the first inequality. Fix $p\in \{1,2\}$.
There is a family $\{\rho_j\} \subset \mc{\hat{N}}_\epsilon^p(\mc S)$ such that 
their $\epsilon$-neighborhoods in $p$-norm cover $\mc P$ and such that 
for each $\rho_j$ there exists a rank-1 projector $\psi_j \in \mc P$ satisfying $\norm{\rho_j - \psi_j}_p \leq \epsilon$. 
Then $\{\psi_j\}$ is a $(2\epsilon)$-net for $\mc P$ in $p$-norm with 
$|\mc{\hat{N}}_\epsilon^p(\mc S)| 
\geq |\{\psi_j\}| \geq  | \mc{\hat{N}}_{2\epsilon}^p(\mc P)|$.

From $\norm \argdot_1 \geq \norm \argdot _2$ it follows that $|\mc N^2_\epsilon(\mc P)| \leq |\mc N^1_\epsilon(\mc P)|$.
Hence it remains to prove the lower bound for $|\mc N^2_{2\epsilon}(\mc P)|$ in \eqref{eq:epsilonnets}.
For this we construct an $\epsilon'$-net $\mc N^\HS_{\epsilon'}(P)$ for state vectors in Hilbert space norm from a $(2\epsilon)$-net $\mc N^2_{2\epsilon}(\mc P)$.
For every element $\psi_j \in \mc N^2_{2\epsilon}(\mc P)$ we fix an eingenvalue-1 eigenvector $\ket{\psi_j}$.
Using the $(\epsilon^2/2)$-net 
$\mc N_{\epsilon^2/2}^1([0,1[)= \{\epsilon^2, 2 \epsilon^2, \ldots, \lceil 1/\epsilon^2 \rceil \epsilon^2 \}$ for $[0,1[$ with cyclic boundary conditions we define the set
\begin{align} \label{eq:epsprimenet}
\mc N_{\epsilon'}^\HS(P) =
\{ \e^{2\pi \i \delta} \ket{\psi} : \delta \in \mc N_{\epsilon^2/2}^1([0,1[), \ \ket{\psi} \in \{\ket{\psi_j}\} \} \ .
\end{align}
This is an $\epsilon'$-net for $P$ and we will find an expression for $\epsilon'$ in terms of $\epsilon$. 

Let $\ket \phi \in P$. Then there exists a state vector $\ket \psi \in \{\ket{\psi_j}\}$ such that 
\begin{align*}
(2\epsilon)^2 
    &\geq \norm{\phi - \psi}_2^2 = 2 - 2 |\braket{\phi}{\psi}|^2 \\
    &\geq  2 - 2 |\braket{\phi}{\psi}| \ ,
\end{align*}
and a $\delta \in \mc N_{\epsilon^2/2}^1([0,1[)$ such that
\begin{align*}
\left|\, \left|\braket \phi \psi \right|- \Re(e^{2\pi\i \delta }\braket \phi \psi) \right| < (2\epsilon)^2 \ .
\end{align*}
Together this yields
\begin{align}\nonumber
  3 (2 \epsilon)^2 > 2 - 2 \Re( \e^{2\pi\i \delta} \braket{\phi}{\psi}) =
| \ket\phi - \e^{2\pi\i \delta}\ket \psi |_2^2   \ .
\end{align}
Since $\e^{2\pi \i \delta} \ket \psi \in \mc N^\HS_{\epsilon'}(P)$, we can choose $\epsilon'=4\epsilon > \sqrt{12} \epsilon$ to make $\ {\mc N}_{\epsilon'}^\HS(P)$ a $(4\epsilon)$-net.
From the definition~\eqref{eq:epsprimenet} of $\mc N_{\epsilon'}^\HS(P)$ we can bound its cardinality
\begin{align}\nonumber
|\mc N_{4\epsilon}^\HS(P)|&=|\mc N_{\epsilon^2}^1([0,1[)|\, |\{\ket{\psi_i}\}|\\
\label{eq:epsilonnetinequality}
&< \lceil 1/\epsilon^2 \rceil |\mc N^2_{2\epsilon}(\mc P)| \ ,
\end{align}
where we have used that by construction $|\{\ket{\psi_i}\}| = |\mc N^2_{2\epsilon}(\mc P)|$. 
Finally, as the described construction works for any $(2\epsilon)$-net $\mc N^2_{2\epsilon}(\mc P)$, we obtain
\begin{align}
\left\lceil1/\epsilon^2\right\rceil \,|\hat{\mc N}^2_{2\epsilon}(\mc P)|
>
|\hat{\mc N}^{\mathrm{HS}}_{4\epsilon}(P)|
\end{align} 
and Lemma~\ref{lem:epsnetscalingP}  finishes the proof.
\end{proof}

Combining Theorem~\ref{thm:numbercircuits} and Lemma~\ref{epsilonnets}, we arrive at the following theorem:
\begin{theorem}[\bf Limitations of efficient state generation]\label{thm:onenormepsnet}
For every fixed initial state, the time evolution for a time interval of length $\tau$ under any $k$-local Liouvillian acting on $N$ subsystems with local Hilbert space dimension $d$ yields 
a state that lies inside one of $N_T$ $\epsilon$-balls in $1$-norm with
$\log(N_T) \in \landauO\left({N^{3k+2} \tau^4/\epsilon^5} \right)$.  
For times $\tau$ polynomial in the system size $N$, this is asymptotically exponentially smaller than 
$\log |\mc{\hat{N}}_\epsilon^1(\mc S)| \in \Omega(d^N/\log(1/\epsilon))$ where $|\mc{\hat{N}}_\epsilon^1(\mc S)|$ is the cardinality of an optimal $\epsilon$-net in $1$-norm that covers the state space~$\mc S$.
\end{theorem}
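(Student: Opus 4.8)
The plan is to derive Theorem~\ref{thm:onenormepsnet} as the combination of two results already in hand: the counting upper bound of Theorem~\ref{thm:numbercircuits}, which limits how many states are reachable, and the net lower bound of Lemma~\ref{epsilonnets}, which quantifies how many $\epsilon$-balls are needed to cover $\mc S$. The content of the statement is the contrast between these two numbers: the reachable count grows only polynomially in $N$ for polynomial $\tau$, while a covering net is exponentially large in $N$.

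First I would fix the initial state $\rho$ and invoke Theorem~\ref{thm:numbercircuits}: every propagator $\T[L]{\tau}{0}$ generated by a $k$-local time dependent Liouvillian is approximable in $(1\to 1)$-norm to accuracy $\epsilon$ by one out of $N_T$ channel circuits $\tilde T$, with $\log(N_T)\in\landauO(N^{3k+2}\tau^4/\epsilon^5)$. Since $\rho$ is a density matrix, $\nnorm{\rho}_1=1$, and submultiplicativity gives
\begin{align}
\nnorm{\T[L]{\tau}{0}(\rho)-\tilde T(\rho)}_1 \leq \nNorm{\T[L]{\tau}{0}-\tilde T}\,\nnorm{\rho}_1 \leq \epsilon .
\end{align}
Hence each reachable output $\T[L]{\tau}{0}(\rho)$ lies in the $1$-norm $\epsilon$-ball around the output $\tilde T(\rho)$ of one of the $N_T$ circuits, so all states reachable from $\rho$ in time $\tau$ sit in a union of $N_T$ such balls.

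Next I would instantiate Lemma~\ref{epsilonnets} with the full Hilbert space dimension $D=d^N$, which yields
\begin{align}
|\mc{\hat{N}}^1_\epsilon(\mc S)| \geq |\mc{\hat{N}}^1_{2\epsilon}(\mc P)| \in \Omega\!\left(\left(\tfrac{1}{4\epsilon}\right)^{2d^N-3}\right) .
\end{align}
Taking logarithms, $\log|\mc{\hat{N}}^1_\epsilon(\mc S)|\in\Omega(d^N\log(1/\epsilon))$, which is exponential in the number of subsystems $N$ and in particular gives the (weaker) bound $\Omega(d^N/\log(1/\epsilon))$ recorded in the statement. Because any $\epsilon$-cover of $\mc S$ needs at least $|\mc{\hat{N}}^1_\epsilon(\mc S)|$ balls while the reachable states occupy only $N_T$ of them, the reachable fraction of state space is at most $N_T/|\mc{\hat{N}}^1_\epsilon(\mc S)|$.

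Finally I would compare asymptotics: for $\tau$ polynomial in $N$ the exponent $\log(N_T)$ is polynomial in $N$, whereas $\log|\mc{\hat{N}}^1_\epsilon(\mc S)|$ is exponential in $N$, so $N_T$ is exponentially smaller than $|\mc{\hat{N}}^1_\epsilon(\mc S)|$ and the reachable relative volume is exponentially small. I expect essentially all the substance to reside in the two ingredients rather than in this assembly: the upper bound already packages the full Trotter machinery of Theorem~\ref{fulltrotterthm} together with the Stinespring dilation and the Solovay--Kitaev decomposition inside Theorem~\ref{thm:numbercircuits}, and the lower bound already carries the delicate step of Lemma~\ref{epsilonnets}, namely lifting the state-vector net bound of Lemma~\ref{lem:epsnetscalingP} to density matrices in trace distance while taming the global-phase degeneracy. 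The only genuine obstacle at the level of this theorem is to keep the asymptotic comparison honest, treating the $\epsilon$-dependence consistently on both sides so that the exponential separation in $N$ is preserved.
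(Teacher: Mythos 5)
Your proposal is correct and takes essentially the same route as the paper, whose entire proof of Theorem~\ref{thm:onenormepsnet} consists of combining Theorem~\ref{thm:numbercircuits} with Lemma~\ref{epsilonnets} instantiated at $D=d^N$ and comparing logarithms; your assembly (applying the circuit approximation to the fixed initial state via submultiplicativity, then the net lower bound) fills in exactly the intended steps. Your side remark is also right: the derivation actually yields the stronger bound $\log|\hat{\mc N}^1_\epsilon(\mc S)| \in \Omega(d^N\log(1/\epsilon))$, of which the stated $\Omega(d^N/\log(1/\epsilon))$ is a weaker consequence.
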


\end{document}